\newtheorem{theorem}{Theorem}
\theoremstyle{plain}
\def\be{\begin{equation}}
\def\ee{\end{equation}}
\newtheorem{problem}{Problem}
\numberwithin{equation}{section}
\renewcommand\Re{\operatorname{Re}}
\def\Bi{\text{Bi}}
\def\Pr{\text{Pr}}
\begin{document}
\title{Dynamic Transitions of Surface Tension Driven Convection}
\author[Dijkstra]{Henk Dijkstra}
\address[HD]{Institute for Marine and Atmospheric research Utrecht
Department of Physics and Astronomy
Utrecht University
Princetonplein 5, 3584 CC Utrecht, 
The Netherlands}
\email{H.A.Dijkstra@uu.nl}

\author[Sengul]{Taylan Sengul}
\address[TS]{Department of Mathematics,
Indiana University, Bloomington, IN 47405}
\email{msengul@indiana.edu}

\author[Wang]{Shouhong Wang}
\address[SW]{Department of Mathematics,
Indiana University, Bloomington, IN 47405}
\email{showang@indiana.edu, http://www.indiana.edu/~fluid}

\thanks{The work of TS and SW was supported in part by the
Office of Naval Research and by the National Science Foundation.}

\keywords{surface tension driven convection, dynamic transition theory, Marangoni convection, B\'enard convection, hexagonal pattern, well-posedness}
\subjclass{76E06, 35Q35, 35B36}

\begin{abstract}
We study the well-posedness and dynamic transitions of the surface tension driven convection in a three-dimensional (3D) rectangular box with  non-deformable upper surface and with free-slip boundary conditions. It is shown that as the Marangoni number crosses the critical threshold, the system always  undergoes a dynamic transition. In particular, two different scenarios are studied.  In the first scenario, a single mode losing its stability at the critical parameter gives rise to either a Type-I (continuous) or  a Type-II (jump)  transition. The type of transitions  is dictated by the sign of a computable  non-dimensional parameter, and the  numerical computation of this parameter  suggests that a Type-I transition is favorable.
The second scenario  deals with the case where the geometry of the domain allows two critical modes which possibly characterize a  hexagonal pattern. In this case we show  that the transition can only be either a Type-II or a Type-III (mixed) transition depending on another computable non-dimensional parameter. We only encountered Type-III transition in our numerical calculations. The second part of the paper deals with the  well-posedness and existence of global attractors for the problem.  
\end{abstract}

\maketitle

\section{Introduction}
Since B\'{e}nard's original experiments in 1900 \cite{Benard1900}, it is
well known that when a motionless liquid layer is heated from below, the
liquid layer undergoes a transition from the motionless state to a
convective state as the vertical temperature gradient exceeds a critical
value. If the liquid layer has an upper surface open to ambient air, both
buoyancy and surface tension forces will result from the temperature
gradient. To ensure the onset of convection, these forces must exceed the
dissipative effects of viscous and thermal dissipation \cite{rayleigh}. 
Hence there
exists a critical value of a dimensionless parameter, the Marangoni number, for 
convection to occur.
The effect of the surface tension is dominating in sufficiently shallow layers 
and a micro-gravity environment while buoyancy is
the driving mechanism in deep layers as well as when there is no free
surface.

There are numerous numerical, analytical and experimental studies on the 
B\'{e}nard-Marangoni problem. For a detailed review of the problem, see  
Colinet et al \cite{colinet2001}, Koschmieder \cite{kosch}, Dauby et al. 
\cite{Dauby1993}, Dijkstra \cite{Dijkstra1998} and Rosenblat et al. 
\cite{Rosenblat1982}. 
The main objective of this article is to rigorously investigate  the 
pure surface tension driven convection   in a three-dimensional (3D) rectangular 
box with  non-deformable upper surface  and with free-slip boundary conditions.  

The study is based on the dynamic transition theory developed recently  by Ma  and Wang \cite{ptd, Ma2010}. 
The main philosophy of this theory is to search for  the full set of  transition states, giving a complete characterization on stability and  transition. The set of transition states is often represented by a local attractor. Following this philosophy, the dynamic transition theory is developed  to identify the transition states and to classify them both dynamically and physically.  
One important ingredient  of the theory is the introduction of a new classification scheme of transitions, with which phase transitions are classified into three types: Type-I, Type-II and Type-III. In more mathematically intuitive terms,  they are   called continuous, jump and mixed transitions respectively.  Basically, as the control parameter passes the critical threshold,  the transition states stay in a close neighborhood of the basic state for a Type-I transition,  are outside of a neighborhood of the basic state for a Type-II (jump) transition. For the Type-III transition, a neighborhood is divided into two open regions with a Type-I transition in one region, and a Type-II transition in the other region. 

For the pure surface tension driven convection problem, first we show that  as the Marangoni number crosses the critical threshold, the system always  undergoes a dynamic transition. To classify the type of transitions, and structure of the transition solutions, we consider two scenarios.  In the first scenario, a single mode losing its stability at the critical value of the parameter gives rise to either a Type-I (continuous) or  a Type-II (jump)  transition. The type of transitions  is dictated by the sign of a computable  non-dimensional parameter, and the  numerical computation of this parameter  suggests that a Type-I transition is favorable for all values of the Prandtl number.
The second scenario  deals with the case where the geometry of the domain allows two critical modes which possibly characterize a  hexagonal pattern. In this case we show  that the transition can only be either a Type-II or a Type-III (mixed) transition depending on another computable non-dimensional parameter. However we only encountered Type-III transition in our numerical calculations. 

One crucial part  of the analysis is the reduction of the original partial differential equation system to the center manifold generated by the first unstable modes, leading to either a one-dimensional (for the first scenario) or a two-dimensional (for the second scenario) dynamical system. The dynamic transition behavior is then characterized using the reduced system following the ideas from the dynamic transition theory. However, it is worth mentioning that for the hexagon case, the reduced two-dimensional system consists of both quadratic and cubic nonlinearities. In addition, the quadratic terms are degenerate, and the cubic terms are needed to fully characterize the flow structure. This type of nonlinearities with degenerate quadratic terms appear also in many other fluid mechanical problems. Here for the first time, we are able to fully characterize the dynamic transitions. In particular, in the Type-III transition case, the hexagonal flows are represented by the local attractor for the continuous transition part of the Type-III transition. Furthermore, these hexagonal  flow patterns  are metastable. Namely, the original system undergoes a dynamic transition either to these hexagonal structure or to some more complicated flow pattern far away from the basic state. 

The paper is organized as follows. The model is given in Section 2. Section 3 deals with the linear stability problem, leading to precise information on the principle of exchange of stabilities of the problem. Section 4 reduces the original problem to the center manifold generated by the first unstable modes, and Section 5 states the main dynamic transition theorems, which are proved in Section 6. In section 7, the well-posedness of the problem  and existence of global attractors are studied. As we know,  well-posedness is one of the basic issues for nonlinear problems, and the existence of global attractor indicates that the system is a dissipative system in the sense of Prigogine.  A short summary and a discussion of the results in relation to  the physics of the problem
is presented in Section 8.

\section{The Model}
With the Boussinesq approximation, the (non-dimensional) equations governing the motion and states  of the pure
Marangoni convection on a nondimensional  rectangular domain
$\Omega =\left( 0,L_{1}\right)
\times \left( 0,L_{2}\right) \times \left( 0,1\right) \subset \mathbb{R}^{3}$
are given as follows 
(see e.g., Dijkstra \cite{Dijkstra1998}): 
\begin{equation}  \label{main}
\begin{aligned} & \frac{\partial \mathbf{u}}{\partial t}+\left(
\mathbf{u}\cdot \nabla \right) \mathbf{u} = \text{Pr}\left( -\nabla p+\Delta
\mathbf{u}\right) , \\ & \frac{\partial \theta }{\partial t}+\left(
\mathbf{u}\cdot \nabla \right) \theta = w+\Delta \theta , \\ & \nabla \cdot
\mathbf{u} =0, \\ & \mathbf{u}\left( 0\right) =\mathbf{u}_{0}\text{, \ \
}\theta \left( 0\right) =\theta _{0}. \end{aligned}
\end{equation}
Here the effect of gravity is ignored by setting the Rayleigh number $Ra=0$, and we consider the deviation 
from a motionless basic steady state with  a constant vertical temperature
gradient $\eta >0$. The unknown functions are the  velocity field $\mathbf{u}=\left( u,v,w\right) $,  the temperature function $\theta $,  and the pressure function $p$. In addition, $\text{Pr}={\nu }/{\kappa }>0$ is the Prandtl
number, $\nu $ stands for the kinematic viscosity,  and $\kappa $ is  the thermal
diffusivity. 

The above system is supplemented with a set of boundary conditions. We use the free-slip boundary conditions on the 
lateral boundaries, and the rigid (no slip) boundary condition and perfectly conducting on the bottom boundary.  The top
surface is assumed to be a non-deformable free surface with  a surface
tension of the form 
$$
\xi=\xi_{0} (1 - \gamma_{T}\theta).
$$
Namely, the boundary conditions are as follows:
\begin{equation}  \label{bc}
\begin{aligned} 
& u =\frac{\partial v}{\partial x}=\frac{\partial w}{\partial x}=\frac{\partial \theta }{\partial
x}=0\,
      &&\text{at}\,x=0,L_{1}, \\ & \frac{\partial u}{\partial y}
=v=\frac{\partial w}{\partial y}=\frac{\partial \theta }{\partial
y}=0\,&&\text{at}\,y=0,L_{2}, \\ & u =v=w=\theta =0\,&&\text{at}\,z=0, \\ &
\frac{\partial \left( u,v\right) }{\partial z}+\lambda \nabla _{H}\theta
=w=\frac{\partial \theta }{\partial z}+ \text{Bi} \theta
=0\,&&\text{at}\,z=1, 
\end{aligned}
\end{equation}
where $\nabla _{H}=\left( \partial _{x},\partial _{y}\right) $,  $\text{Bi} \geq 0$ 
is the Biot number, and the Marangoni number $\lambda $ is the
control parameter defined by 
\begin{equation*}
\lambda =\frac{\xi_{0}\gamma _{T}\eta d^{2}}{\rho _{0}\nu \kappa }>0,
\end{equation*}
$d$ is the dimensional depth of the box and $\rho _{0}$ is the
reference value for the density. Note that the Marangoni number represents the ratio of
the destabilizing surface tension gradient to the stabilizing forces
associated with thermal and viscous diffusion.

\section{Principle of Exchange of Stability}
We  recall in this section the linear theory of the problem. 
The linear equations associated with (\ref{main})  and (\ref{bc}) are:
\begin{equation}
\begin{aligned} 
& \text{Pr}\left( -\nabla p+\Delta \mathbf{u}\right) =\beta \mathbf{u}, \\ 
& w+\Delta \theta =\beta \theta , \\ & \nabla \cdot
\mathbf{u} =0,  
\end{aligned}  \label{Lin-0}
\end{equation}
supplement with the same boundary conditions given by (\ref{bc}).
The adjoint problem defined by
$$ 
\int_\Omega (L_{\lambda}\psi) \overline{\psi^{\ast}}=\int_\Omega \psi (\overline{L_{\lambda}^{\ast} \psi^{\ast}} )
$$
can be written as:
\begin{equation}  \label{adj1}
\begin{aligned} & \text{Pr}\left( -\nabla p^{\ast }+\Delta \mathbf{u}^{\ast
}\right) +\overrightarrow{k}\theta ^{\ast } =\overline{\beta
}\mathbf{u}^{\ast }, \\ & \Delta \theta ^{\ast } =\overline{\beta }\theta
^{\ast }, \\ & \nabla \cdot \mathbf{u}^{\ast } =0. \end{aligned}
\end{equation}
Here an overbar denotes complex conjugation. 
The boundary conditions for the adjoint problem at the lateral sides and at $z=0$ 
are the same as the linear eigenvalue problem but are different at $z=1$:
\begin{equation}
\frac{\partial u^{\ast }}{\partial z}=\frac{\partial v^{\ast }}{\partial z}
=w^{\ast }=0, \quad \frac{\partial \theta ^{\ast }}{\partial z}+\text{Bi}\theta
^{\ast }+\lambda \text{Pr}\frac{\partial w^{\ast }}{\partial z}=0 \quad \text{at}
\quad z=1.  \label{adjbc}
\end{equation}

By the separation of variables, we represent the solutions 
in the following form:
\begin{equation} \label{sepofvar}
\begin{aligned} 
& u_I =U_I\left( z\right) \sin L _{1}^{-1}I_x\pi x\cos L _{2}^{-1}I_y\pi y, \\ 
& v_I =V_I\left( z\right) \cos L _{1}^{-1}I_x\pi x\sin L _{2}^{-1}I_y\pi y, \\ 
& w_I =W_I\left( z\right) \cos L _{1}^{-1}I_x\pi x\cos L _{2}^{-1}I_y\pi y, \\ 
& \theta_I =\Theta_I \left( z\right) \cos L _{1}^{-1}I_x\pi x\cos L _{2}^{-1}I_y\pi y. 
\end{aligned}  
\end{equation} 
for $I=(I_x,I_y)\in \mathbb{Z}\times\mathbb{Z}$.
Let
\begin{equation}\label{alpha}
\alpha_I=\left((L _{1}^{-1}I_x)^2+(L _{2}^{-1}I_y )\right)^{1/2}\pi.
\end{equation}
It is easy to see that, for $\alpha_I \neq 0$, the horizontal velocity components can be obtained as:
\begin{equation}\label{hor.comp}
U_I (z)=-\frac{L _{1}^{-1}I_x\pi}{\alpha_I^2} DW_I(z), \qquad  V_I(z)=-\frac{L _{2}^{-1}I_y \pi}{\alpha_I^2} DW_I(z),
\end{equation}
Here
$$
D=\frac{d}{dz}.
$$
By \eqref{sepofvar} and \eqref{hor.comp} for $I=(I_x,I_y)$, $\phi_I$  does not change when $I_x$ or $I_y$ changes sign. Thus we consider only nonnegative wave indices $I_x,I_y\in \{ 0,1,2,\dots \}$. When $I_x^{2}+I_y^{2}\neq 0$, the ODE satisfied by $W$ and $\Theta$ is
\begin{equation} \label{Lin3}
\begin{aligned} 
& \left( D^{2}-\alpha ^{2}\right) \left( D^{2}-\alpha^{2}-\text{Pr}^{-1}\beta \right) W =0, \\ 
& \left( D^{2}-\alpha ^{2}-\beta \right) \Theta =-W. 
\end{aligned} 
\end{equation}
Using the divergence free condition, we can write the boundary conditions (\ref{bc}) 
at the upper and lower boundaries as: 
\begin{equation}
\begin{aligned} 
& W\left( 0\right) =DW\left( 0\right) =\Theta \left(0\right) =0, \\ 
& W\left( 1\right) =D\Theta \left( 1\right) +\text{Bi}\Theta
\left( 1\right) =D^{2}W\left( 1\right) +\alpha ^{2}\lambda \Theta \left(
1\right) =0. 
\end{aligned}  \label{linbc}
\end{equation}
We write the corresponding ordinary differential equations to the adjoint equations as
\begin{equation} \label{adj-equ}
\begin{aligned} 
& \left( D^{2}-\alpha ^{2}\right) \left( D^{2}-\alpha^{2}-\text{Pr}^{-1}\bar{\beta} \right) W^{\ast }
-\alpha ^{2}\text{Pr}^{-1}\Theta^{\ast } =0, \\ 
& \left( D^{2}-\alpha ^{2}-\bar{\beta} \right) \Theta ^{\ast} =0,
\end{aligned}
\end{equation}
with the boundary conditions 
\begin{equation} \label{adjodebc}
W^{\ast }\left( 1\right) =D^{2}W^{\ast }\left( 1\right) =D\Theta ^{\ast}\left( 1\right) 
+\text{Bi}\Theta ^{\ast }\left( 1\right) +\lambda
\text{Pr } DW^{\ast }\left( 1\right) =0.
\end{equation}

Solving the boundary condition 
$D^{2}W\left( 1\right) +\alpha ^{2}\lambda \Theta \left( 1\right) =0$ 
in the $\beta=0$ case gives us the critical Marangoni number:
\begin{equation}\label{Mac}
\lambda _{c}=\min_{\substack{ j,k\in \left\{ 0,1,2,\dots \right\}  \\
j^{2}+k^{2}\neq 0  \\ \alpha ^{2}=j^{2}\pi ^{2}L _{1}^{-2}+k^{2}\pi ^{2}L
_{2}^{-2}}}\frac{8\alpha \left( \alpha \cosh \alpha +\text{Bi}\sinh \alpha
\right) \left( \alpha -\cosh \alpha \sinh \alpha \right) }{\alpha ^{3}\cosh
\alpha -\sinh ^{3}\alpha }.  
\end{equation}%
Denote the set of critical indices by $\mathcal{C}$:
\begin{equation}\label{P}
\mathcal{C}=\{I\mid\text{$I=(I_x,I_y)$ minimizes \eqref{Mac}}\}.
\end{equation}
Since the function being minimized at \eqref{Mac} is a convex function of $\alpha$, as shown in Figure \ref{Biot-alpha}, the set $\mathcal{C}$ is non-empty. Clearly $\mathcal{C}$ is finite.
\begin{figure}
\includegraphics[scale=0.6]{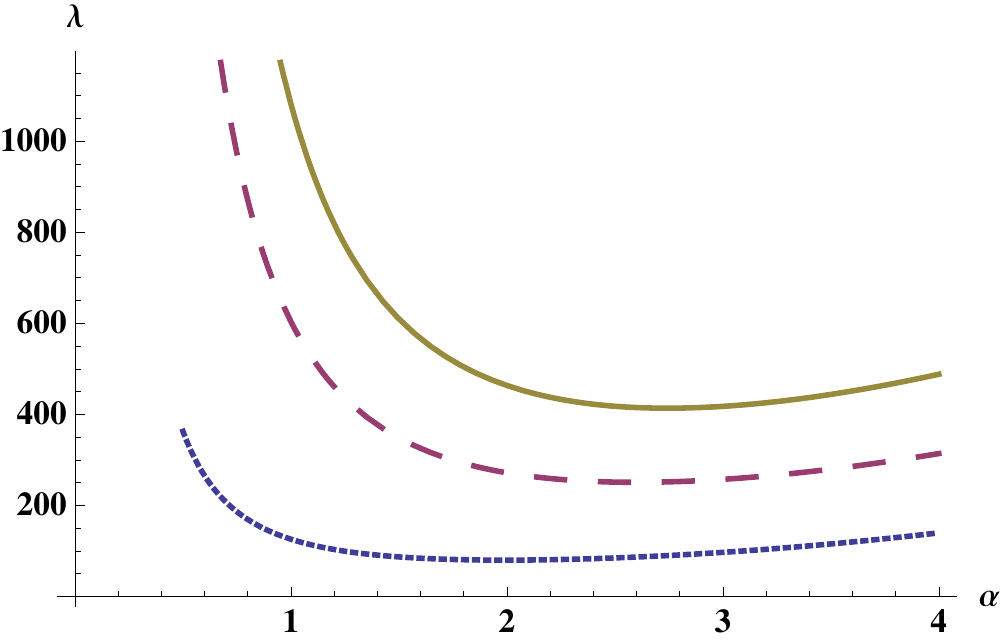}
\caption{Marginal stability curves at $Bi=0$ (dotted), $Bi=5$ (dashed), $Bi=10$(continuous). }
\label{Biot-alpha}
\end{figure}

For a fixed $I=(I_x,I_y)$, there are infinitely many eigenvalues which can be ordered as:
$$
\Re{\beta_{(I,1)}}\geq\Re{\beta_{(I,2)}}\geq\cdots,
$$
and we denote the corresponding eigenvectors by $\phi_{(I,k)}$. Then the critical modes are $\phi_{(I,1)}$ where $I\in\mathcal{C}$.

Using a Green-functions technique to reduce \eqref{Lin3}-\eqref{linbc} to a single differential equation,  
Vrentas-Vrentas \cite{Vrentas2004} gave a simple analytical proof that for the problem \eqref{Lin3}-\eqref{linbc}, critical eigenvalues must be real. Thus $\beta_{(J,1)}\in\mathbb{R}$ for $J\in\mathcal{C}$. Moreover the following theorem justifies   the principle of  exchange of stability.

\begin{theorem}\label{PES} 
The eigenvalues of the linear
problem (\ref{Lin-0}) with boundary conditions (\ref{bc}) satisfy%
\begin{align}
& \beta _{(J,1)} \left( \lambda \right) =\left\{ 
\begin{aligned}
& <0 && \lambda <\lambda _{c} \\ 
& =0 && \lambda =\lambda _{c} \\ 
& >0 && \lambda >\lambda _{c},
\end{aligned}\right. 
          &&  \forall J\in\mathcal{C}\label{PES1} \\
& \Re \beta _{(J,k)}\left( \lambda _{c}\right) <0 && \forall J\notin \mathcal{C}.
\label{PES2}
\end{align}
where $\mathcal{C}$ is given by \eqref{P}.
\end{theorem}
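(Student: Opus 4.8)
The plan is to exploit the ODE reduction \eqref{Lin3}--\eqref{linbc}, the Vrentas--Vrentas reality result \cite{Vrentas2004}, and a continuation argument in $\lambda$. First I would dispose of the degenerate index $I=(0,0)$: when $\alpha_I=0$ the separated ansatz \eqref{sepofvar} together with $\nabla\cdot\mathbf{u}=0$ and $w(0)=0$ forces $\mathbf{u}\equiv 0$, so $\theta$ solves a one-dimensional Sturm--Liouville problem with Dirichlet data at $z=0$ and Robin data at $z=1$, whose spectrum is strictly negative and independent of $\lambda$; hence $(0,0)\notin\mathcal{C}$ and \eqref{PES2} is automatic for it. For $\alpha_I>0$, setting $\beta=0$ in \eqref{Lin3} gives $(D^2-\alpha^2)^2W=0$; solving successively under $W(0)=DW(0)=W(1)=0$, then $(D^2-\alpha^2)\Theta=-W$ with $\Theta(0)=0$, $D\Theta(1)+\text{Bi}\,\Theta(1)=0$, and finally imposing $D^2W(1)+\alpha^2\lambda\Theta(1)=0$, shows that $\beta=0$ is an eigenvalue at wave number $\alpha_I$ if and only if $\lambda=R(\alpha_I)$, where $R(\alpha)$ is the quotient minimized in \eqref{Mac}. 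Thus $\lambda_c=\min_I R(\alpha_I)$ is attained exactly on $\mathcal{C}$, which is finite and nonempty since $R>0$ is convex in $\alpha$ (Figure~\ref{Biot-alpha}) and $R(\alpha)\to\infty$ as $\alpha\to 0^+$ and as $\alpha\to\infty$.

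Next I would show that $\lambda_c$ is the first parameter value at which any eigenvalue reaches the imaginary axis. At $\lambda=0$ the linearized operator is block triangular (the velocity equation decouples from $\theta$), so its spectrum is the union of the Stokes spectrum and the spectrum of the Laplacian with the mixed Dirichlet/Robin/Neumann data, both strictly negative; hence $\Re\beta_{(I,k)}(0)<0$ for all $I,k$. The eigenvalues depend continuously on $\lambda$, and for each fixed $\alpha_I$ the spectrum of \eqref{Lin3}--\eqref{linbc} is discrete with real parts bounded above (energy estimate), so if some $\beta_{(I,k)}$ leaves the left half-plane there is a least $\lambda$ at which $\Re\beta_{(I,k)}=0$; by the Vrentas--Vrentas analysis an eigenvalue on the imaginary axis is necessarily $0$, so at that $\lambda$ one has $\beta_{(I,k)}=0$, hence $\lambda=R(\alpha_I)\geq\lambda_c$, with equality only for $I\in\mathcal{C}$. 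Consequently $\Re\beta_{(I,k)}(\lambda)<0$ for every $I,k$ when $\lambda<\lambda_c$; at $\lambda=\lambda_c$ we get $\beta_{(J,1)}(\lambda_c)=0$ for $J\in\mathcal{C}$ (this is the largest eigenvalue, and real by Vrentas--Vrentas), while $R(\alpha_I)>\lambda_c$ for $I\notin\mathcal{C}$ forces $\Re\beta_{(I,k)}(\lambda_c)<0$ there. This gives all of \eqref{PES2} and the cases $\lambda<\lambda_c$ and $\lambda=\lambda_c$ of \eqref{PES1}.

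It then remains to prove $\beta_{(J,1)}(\lambda)>0$ for $\lambda>\lambda_c$ and $J\in\mathcal{C}$. Since $\beta_{(J,1)}(\lambda_c)=0$ is a simple real eigenvalue, the branch $\lambda\mapsto\beta_{(J,1)}(\lambda)$ is smooth near $\lambda_c$ with
\[
\frac{d\beta_{(J,1)}}{d\lambda}\Big|_{\lambda_c}=\frac{\langle (\partial_\lambda L_{\lambda_c})\,\phi_{(J,1)},\,\phi^{\ast}_{(J,1)}\rangle}{\langle \phi_{(J,1)},\,\phi^{\ast}_{(J,1)}\rangle},
\]
where $\phi^{\ast}_{(J,1)}$ is the adjoint eigenfunction from \eqref{adj1}--\eqref{adjbc}. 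Because $\lambda$ enters the problem only through the surface condition $\partial_z(u,v)+\lambda\nabla_H\theta=0$ at $z=1$, the numerator reduces to an explicit boundary integral over $\{z=1\}$; inserting the closed-form $\beta=0$ profiles $W,\Theta$ and $W^{\ast},\Theta^{\ast}$ and comparing its sign with the (nonzero) denominator yields $\tfrac{d}{d\lambda}\beta_{(J,1)}(\lambda_c)>0$. Hence $\beta_{(J,1)}$ increases through $0$ at $\lambda_c$; and since, by the same Vrentas--Vrentas obstruction, it cannot re-enter the imaginary axis except through the origin, which by the solvability computation above happens only at $\lambda=\lambda_c$, its real part stays positive for all $\lambda>\lambda_c$, completing \eqref{PES1}.

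The routine parts are the ODE reduction, the $\lambda=0$ decoupling, and the $\beta=0$ computation of $R(\alpha)$. The main obstacle is the last step: establishing the sign of $\tfrac{d}{d\lambda}\beta_{(J,1)}(\lambda_c)$, which forces one to solve the adjoint system \eqref{adj-equ}--\eqref{adjodebc} at $\beta=0$ in closed form, pair it against the direct eigenfunction, and verify that the resulting surface-integral expression and the algebraic pairing $\langle\phi_{(J,1)},\phi^{\ast}_{(J,1)}\rangle$ share a common sign. A secondary technical point is the justification that the eigenbranches vary continuously in $\lambda$ and cannot cross the imaginary axis off the real axis — both handled by the Vrentas--Vrentas reality statement together with the discreteness of the spectrum for each fixed $\alpha_I$.
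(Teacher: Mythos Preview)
Your approach is essentially the same as the paper's: the crux is the positivity of $\tfrac{d}{d\lambda}\beta_{(J,1)}(\lambda_c)$, obtained from the adjoint pairing as a boundary integral at $z=1$, which the paper evaluates explicitly via the closed-form $\beta=0$ eigenfunctions to get $\tfrac{L_1L_2}{4}\,\Theta_{(J,1)}(1)\,DW_{(J,1)}^{\ast}(1)>0$. You add more of the surrounding continuation argument (the $\lambda=0$ decoupling, the Vrentas--Vrentas obstruction to non-real crossings) than the paper, whose proof is terse and treats \eqref{PES2} and the global $\lambda>\lambda_c$ case as immediate once the derivative sign is known; note also that the paper's derivative formula omits the normalizing denominator $\langle\phi_{(J,1)},\phi_{(J,1)}^{\ast}\rangle$ that you correctly include.
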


\begin{proof}
Let $J \in \mathcal{C}$. We will use the following formula, derived by Ma and Wang \cite{ptd}, to verify the
critical crossing of the first eigenvalue:
\begin{equation*}
\frac{d}{d\lambda }\beta _{(J,1)}\left( \lambda _{c}\right) =\left(\frac{d}{%
d\lambda } L_{\lambda _{c}}\phi _{(J,1)},\phi _{(J,1)}^{\ast }\right) .
\end{equation*}
Here $L_{\lambda }$ is given by (\ref{linearop}). Thus%
\begin{eqnarray*}
\frac{d}{d\lambda }\beta _{(J,1)}\left( \lambda _{c}\right) &=&\text{Pr}%
\int_{z=1}\nabla _{H}\theta _{(J,1)}\cdot \left( u_{(J,1)}^{\ast },v_{(J,1)}^{\ast
}\right) dxdy \\
&=&\frac{L_{1}L_{2}}{4}\Theta _{(J,1)}\left( 1\right) DW_{(J,1)}^{\ast }\left(
1\right) \\
&=&\frac{L_{1}L_{2}}{4}\frac{\left( -\alpha_J ^{3}\cosh \alpha_J +\sinh
^{3}\alpha_J \right) ^{2}}{\sinh \alpha_J\left( \alpha_J \cosh \alpha_J +\text{Bi}\sinh \alpha_J \right) } \\
&>&0.
\end{eqnarray*}
When integrating the boundary integral above, without loss of generality, we assumed $I_x\neq0$ and $I_y\neq0$. So (\ref{PES1}) is valid, and  (\ref{PES2}) is a simple consequence of (\ref{Mac}).
\end{proof}

\section{Transition Equations}
The dynamic transition theory developed by Ma and Wang \cite{ptd} implies that as soon as the linear problem indicates an instability, the nonlinear system always undergoes a dynamic transition, leading to one of the three type of transitions, 
Type-I, II and III. The type of transitions is dictated by the nonlinear interactions. For this purpose,  we will follow the method developed by Ma  and Wang \cite{ptd}, which relies
heavily on the reduction of the problem to the center manifold in the first
unstable eigenmodes. The key step is to find the approximation of the
reduction to certain order, leading to a \textquotedblleft
nondegenerate\textquotedblright\ system with higher order perturbations. The
full dynamic transition and stability analysis is  then carried out.

Consider the critical set $\mathcal{C}$ given by \eqref{P} and let $$\phi=\sum_{I\in \mathcal{C}}y_I \phi_{(I,1)}+\Phi(y)$$ where $\Phi$ is the center manifold function, $\phi_{(I,1)}$ are the critical (first) eigenvectors and $y_I \in \mathbb{R}$ are the corresponding amplitudes. Multiplying the governing evolution equation by $\phi_{(I,1)}^{\ast}$, we see that the amplitude of the critical modes satisfies the following transition equation
\begin{equation}\label{reduced1}
\frac{dy_I}{dt}=\beta_{(I,1)}( \lambda ) y_I+\frac{1}{\left\langle \phi
_{(I,1)},\phi _{(I,1)}^{\ast }\right\rangle }\left\langle G\left( \phi ,\phi \right)
,\phi _{(I,1)}^{\ast }\right\rangle .
\end{equation}
Here $\beta _{(I,1)}\left( \lambda \right) $ is the eigenvalue corresponding to $%
\phi _{I,1}$ which satisfies \eqref{PES1}. The pairing $\left\langle \cdot ,\cdot \right\rangle $
denotes the $L^{2}$-inner product over $\Omega $. The bilinear operator $G$
is as defined by (\ref{bilinearop}).
We write the phase space as 
\begin{equation*}
H=E_{1}\oplus E_{2}, \qquad E_{1}=\text{span}\{\phi_{(I,1)}\mid I\in \mathcal{C}\}, \qquad E_{2}=E_{1}^{\perp }.
\end{equation*}
We use the following approximation of the center manifold function derived by  Ma and Wang \cite{ptd}:
\begin{equation}
-\mathcal{L}_{\lambda }\Phi (\sum_{I\in \mathcal{C}}y_I \phi_{(I,1)}, \lambda) =P_{2}G(\sum_{I\in \mathcal{C}}y_I \phi_{(I,1)}) +o(2),  
\label{centerformula}
\end{equation}
where $\mathcal{L}=L_{\lambda }\mid _{E_{2}}$ is the restriction of the
linear operator defined by (\ref{linearop}) onto $E_{2}$, $P_{2}$ is the
projector from $H$ onto $E_{2}$ and by $o\left( 2\right) $ we mean 
\begin{equation*}
o(2)=o( \left\vert y\right\vert ^{2}) +O( \left\vert
y\right\vert ^{2}\left\vert \beta _{(I,1)}\left( \lambda \right) \right\vert), \qquad I\in\mathcal{C}.
\end{equation*}%
By \eqref{sepofvar}, there is a finite set of indices $\mathcal{S}$ which depends on $\mathcal{C}$, such that 
\begin{equation}\label{ortho}
\begin{aligned}
& \left\langle G\left( \phi _{(I,1)},\phi _{(J,1)}\right) ,\phi_{(K,k)}^{\ast}\right\rangle =0,
\qquad \forall K\notin\mathcal{S}, \, \forall I,J\in\mathcal{C}, \forall k\geq1.
\end{aligned}
\end{equation}
The set $\mathcal{S}$ can be precisely defined as:
\begin{equation}\label{S}
\mathcal{S}=\{(K_x,K_y)\mid K_i=|I_i\pm J_i|,\, i=x,y, \, (I_x,I_y)\in\mathcal{C},(J_x,J_y)\in\mathcal{C}\}.
\end{equation}
Multiplying \eqref{centerformula} by $\phi_{M,k}^{\ast}$ and using \eqref{ortho}, 
we obtain the following approximation of the center manifold function:
\begin{equation}\label{center}
\Phi =\sum_{\substack{I,J\in \mathcal{C},\\ K\in\mathcal{S}\cap\mathcal{C},k\geq2\\ \text{or }K\in\mathcal{S}\setminus\mathcal{C},k\geq1}}y_I y_J \Phi_{IJK}^k\phi_{(K,k)}+o(2).
\end{equation}
The computation of the coefficients $\Phi_{IJK}^k$ in \eqref{center} is tedious yet straightforward. Using \eqref{ortho} and \eqref{center}, the equation (\ref{reduced1}) becomes for each $I \in \mathcal{C}$,
\begin{align}
\frac{dy_I}{dt}=
&\beta_{I,1}(\lambda) y_I+\sum_{\substack{J,K \in  \mathcal{C}}} G((J,1),(K,1),(I,1))\, y_J y_K 
      \label{reduced2}\\
&+\sum_{\substack{J,L,M\in \mathcal{C},\\ K\in\mathcal{S}\cap\mathcal{C},k\geq2\\ \text{or }K\in\mathcal{S}\setminus\mathcal{C},k\geq1}}H((J,1),(K,k),(I,1)) \Phi_{LMK}^k y_J y_L y_M+o(3), \nonumber 
\end{align} 
where 
\begin{equation}
\begin{aligned}
& G((J,j),(K,k),(I,i))=\frac{1}{\langle \phi _{(I,i)},\phi_{(I,i)}^{\ast} \rangle}\langle G(\phi_{(J,j)},\phi_{(K,k)}),\phi_{(I,i)}^{\ast}\rangle, \\
& H((J,j),(K,k),(I,i))= G((J,j),(K,k),(I,i))+G((K,k),(J,j),(I,i)).
\end{aligned}
\end{equation}

The type of transitions  is then determined by the equation \eqref{reduced2}. 
The coefficients of \eqref{reduced2} depend on the  parameters Bi, Pr 
and the structure of the critical index set $\mathcal{C}$.
These coefficients  can be derived using  the formula \eqref{centerformula}. In particular, 
if $\beta_{(M,m)}$ is a real simple  eigenvalue,  then the $\phi_{(M,m)}$ component of the center manifold is simply:
\begin{equation} \label{Phi_real}
\Phi_{IJM}^{m}=-\frac{\langle G\left( \phi _{(I,1)},\phi _{(J,1)}\right)
,\phi_{(M,m)}^{\ast }\rangle }{\beta _{(M,m)} \langle \phi_{(M,m)},\phi_{(M,m)}^*\rangle}.
\end{equation}

\section{Dynamic Transitions}
We study dynamic transitions of the system in two different scenarios. The first is the case where the critical eigenvalue is simple, and  there is only one critical index $I=(I_x,I_y)\in \mathcal{C}$. In this case,  the quadratic terms vanish $\delta_{JKI}=0$. Moreover the set of indices for quadratic interactions are $\mathcal{S}=\{(0,0),(2I_x,0),(0,2I_y),(2I_x,2I_y)\}$. Thus transition equation \eqref{reduced2} becomes
\begin{equation}\label{1inter}
\frac{dy_I}{dt}=\beta_{(I,1)}(\lambda) y_I+ c_I y_I^3+o(3) .
\end{equation}
where 
\be \label{sigma}
c_I=\sum_{K\in\mathcal{S}, k\geq1}H((I,1),(K,k),(I,1))\Phi_{IIK}^k.
\ee

The dynamical transition is then characterized by the following  theorem:

\begin{theorem}\label{single.thm}
Let $c_I$ be as in \eqref{sigma}.
Assume there is a single critical index $I\in\mathcal{C}$. If $c_I<0$ then the system undergoes a Type-I transition at $\lambda=\lambda_c$. In particular, the system bifurcates to  two steady state solutions for $\lambda>\lambda_c$ which are local attractors and are given by:
\begin{equation}\label{1interstructure}
\psi(\lambda)=\pm\sqrt{-\frac{\beta_{(I,1)}(\lambda)}{c_I}}\phi_{(I,1)}+o(\beta_{(I,1)}^{1/2}).
\end{equation}
If $c_I>0$, then the transition is Type-II and the system bifurcates to two steady state solutions for $\lambda<\lambda_c$ which are repellers and there are no steady state solutions that bifurcate from $(\psi,\lambda)=(0,\lambda_c)$ for $\lambda>\lambda_c$.
\end{theorem}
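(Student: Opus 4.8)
The plan is to verify that the single-mode hypothesis reduces the general transition equation \eqref{reduced2} to the scalar pitchfork normal form \eqref{1inter}, and then to invoke the abstract dynamic transition theorem of Ma and Wang \cite{ptd,Ma2010} for a one-dimensional center manifold reduction.

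First I would check that $E_{1}=\mathrm{span}\{\phi_{(I,1)}\}$ is one-dimensional — this uses that the first eigenvalue $\beta_{(I,1)}(\lambda_{c})$ of the separated problem is real (Vrentas--Vrentas \cite{Vrentas2004}) and simple — so that the reduced system is a scalar ODE. Next, the quadratic coefficient in \eqref{reduced2} reduces to $G((I,1),(I,1),(I,1))$, and this vanishes: by \eqref{sepofvar} the horizontal indices appearing in $G(\phi_{(I,1)},\phi_{(I,1)})$ lie in $\mathcal{S}=\{(0,0),(2I_{x},0),(0,2I_{y}),(2I_{x},2I_{y})\}$, and since $I_{x}^{2}+I_{y}^{2}\neq 0$ one has $I\notin\mathcal{S}$, so the $L^{2}$-pairing $\langle G(\phi_{(I,1)},\phi_{(I,1)}),\phi_{(I,1)}^{\ast}\rangle$ is zero by the orthogonality relation \eqref{ortho}; this is precisely the assertion $\delta_{III}=0$. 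Feeding the center manifold approximation \eqref{centerformula}--\eqref{center} (which is quadratic in $y_{I}$) back into \eqref{reduced1} then produces the cubic term $c_{I}y_{I}^{3}$ with $c_{I}$ as in \eqref{sigma}, the remainder being $o(3)$; here one uses that $\mathcal{S}$ is finite, so $c_{I}$ is a finite, well-defined number, and that for the modes $(K,k)$ with $\beta_{(K,k)}$ real and simple the coefficients $\Phi_{IIK}^{k}$ are given explicitly by \eqref{Phi_real}.

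Once \eqref{1inter} is established, the transition classification is a standard pitchfork analysis combined with Theorem \ref{PES}. By Theorem \ref{PES} the crossing $\beta_{(I,1)}(\lambda_{c})=0$ is simple and, via the derivative formula in its proof, transversal, so the hypotheses of the attractor bifurcation / dynamic transition theorem of \cite{ptd,Ma2010} apply to $\dot y_{I}=\beta_{(I,1)}(\lambda)y_{I}+c_{I}y_{I}^{3}+o(3)$. If $c_{I}<0$: the nontrivial zeros $y_{I}=\pm\sqrt{-\beta_{(I,1)}/c_{I}}$ of the truncated vector field are real precisely for $\lambda>\lambda_{c}$, are asymptotically stable, and $y_{I}=0$ loses stability — a continuous (Type-I) transition; pulling back via $\psi=y_{I}\phi_{(I,1)}+\Phi(y_{I})$ and using $\Phi=O(y_{I}^{2})=O(\beta_{(I,1)})=o(\beta_{(I,1)}^{1/2})$ gives the asymptotic form \eqref{1interstructure}. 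If $c_{I}>0$: the nontrivial zeros are real only for $\lambda<\lambda_{c}$ and are repellers, and no steady states bifurcate from $(0,\lambda_{c})$ for $\lambda>\lambda_{c}$ — a jump (Type-II) transition.

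The main obstacle is not the transition classification, which is immediate once \eqref{1inter} holds, but the justification of \eqref{1inter} itself: one must confirm both that the single-mode hypothesis forces $\dim E_{1}=1$ with $\delta_{III}=0$, and, more delicately, that the center manifold approximation \eqref{centerformula} is accurate enough that the residual in \eqref{1inter} is genuinely $o(3)$ uniformly as $\lambda\to\lambda_{c}$. Both are supplied by the general reduction machinery of \cite{ptd}; the only genuinely problem-specific work is the explicit evaluation of the $\Phi_{IIK}^{k}$, hence of $c_{I}$, and the numerical determination of its sign, which we carry out separately.
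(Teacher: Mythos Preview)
Your proposal is correct and follows essentially the same approach as the paper: the paper derives \eqref{1inter} from the single-mode hypothesis exactly as you outline (noting $I\notin\mathcal{S}$ so the quadratic term vanishes, then feeding the center manifold approximation into \eqref{reduced1}), and its entire proof of Theorem~\ref{single.thm} is the one sentence ``by \eqref{1inter}, the proof follows immediately from the standard dynamic transition theorem from the simple eigenvalue \cite{ptd}.'' Your write-up is simply a more explicit unpacking of that same argument.
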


The second scenario deals with hexagonal  patterns which may be expected when two modes characterizing a hexagon becomes unstable at the same critical parameter. For this to happen, a certain relation has to hold for the length scales of the box:
\begin{equation} 
\label{hexbox}
\frac{L_1}{L_2}=\frac{I_x}{I_y \sqrt{3}}, 
\end{equation}
where $I_x$ and $I_y$ are two positive integers. With this relation in hand, the modes with indices $I=(I_x,I_y)$ and $J=(0,2I_y)$ have the same wave number, $\alpha_I=\alpha_J$. Moreover, the vector field $Y \phi_I+Z \phi_J$ defines a hexagon parallel to $y$ when $Y=\pm 2 Z$; see Figure \ref{hexa}. Thus a pair of modes will be critical at the critical parameter $\lambda_c$ whenever one of the modes minimizes the relation \eqref{Mac}. In this case, the reduced transition equations are a two-dimensional system on the center manifold generated by the two unstable modes. To state the transition theorem, we define two crucial parameters and leave the reduced equation and other parameters in in the proof of the transition theorem in the next section:
\begin{equation}
\label{2p}
\begin{aligned} 
& a_1=H((J,1),(I,1),(I,1)), \\
& b_2=\sum_{l\geq1 \text{ and } S=0,2J} H((J,1),(S,l),(J,1))  \Phi_{JJS}^l.
\end{aligned}
\end{equation}

Note that $b_2=c_J$ where $c_I$ is as in \eqref{sigma} (with $J$ should be written instead of $I$ in \eqref{sigma}).

\begin{figure}[tbp]
\includegraphics[scale=0.6]{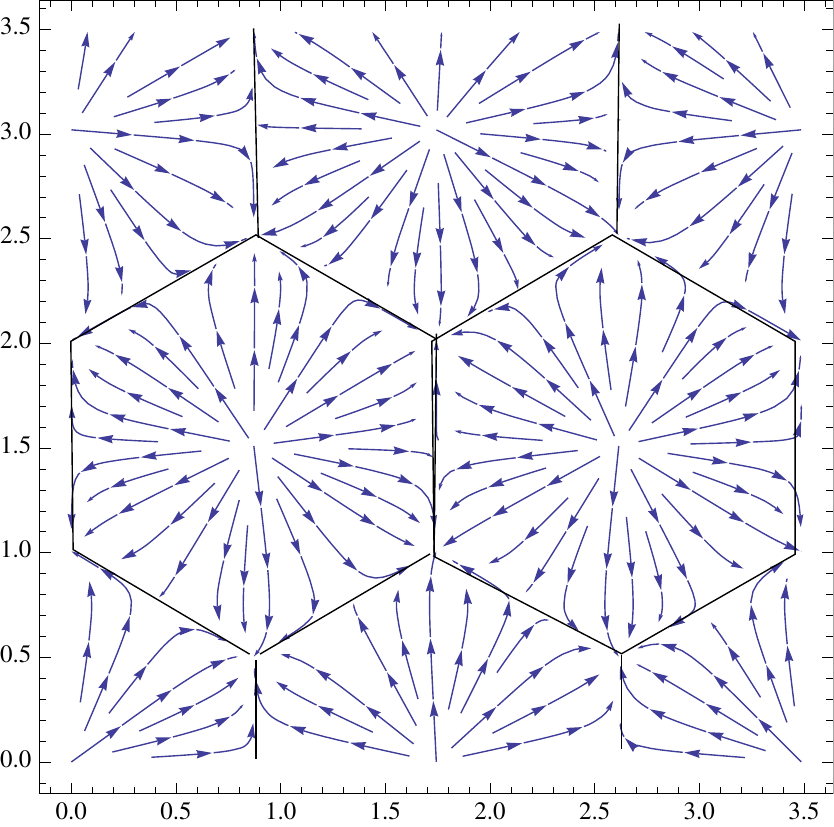}
\caption{The flow structure of $2\phi_{2,1}+\phi_{0,2}$ at $z=1$ for the box dimensions $L_1=L_2=3.5$}
\label{hexa}
\end{figure}

\begin{figure}[ht]
\centering
\subfigure[$\lambda<\lambda_c$]{
\includegraphics[scale=0.4]{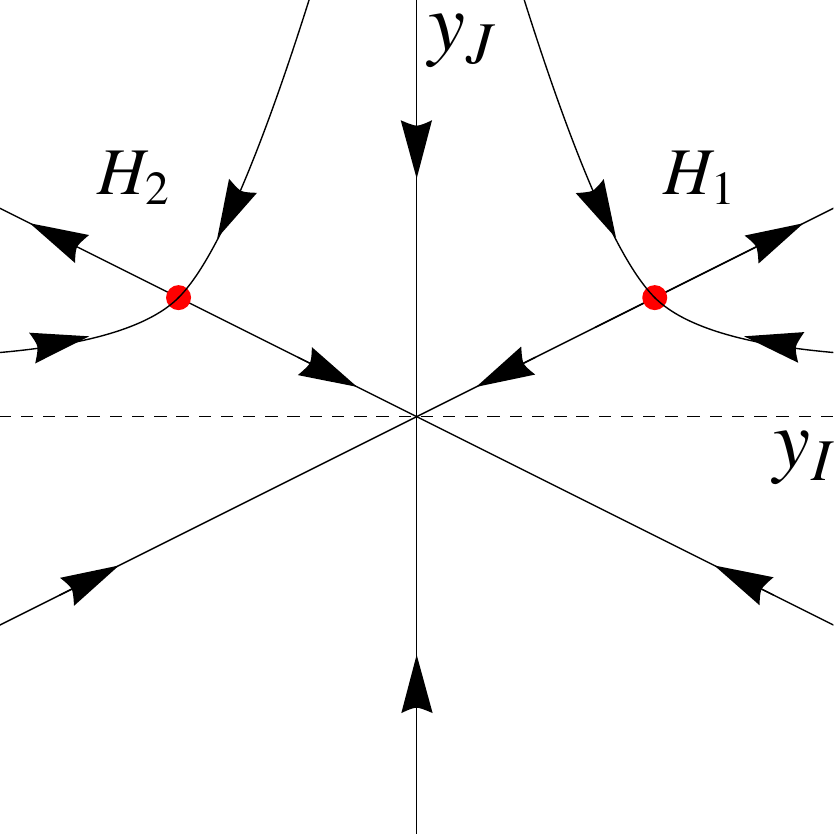}
\label{fig:subfig1}
}
\subfigure[$\lambda=\lambda_c$]{
\includegraphics[scale=0.4]{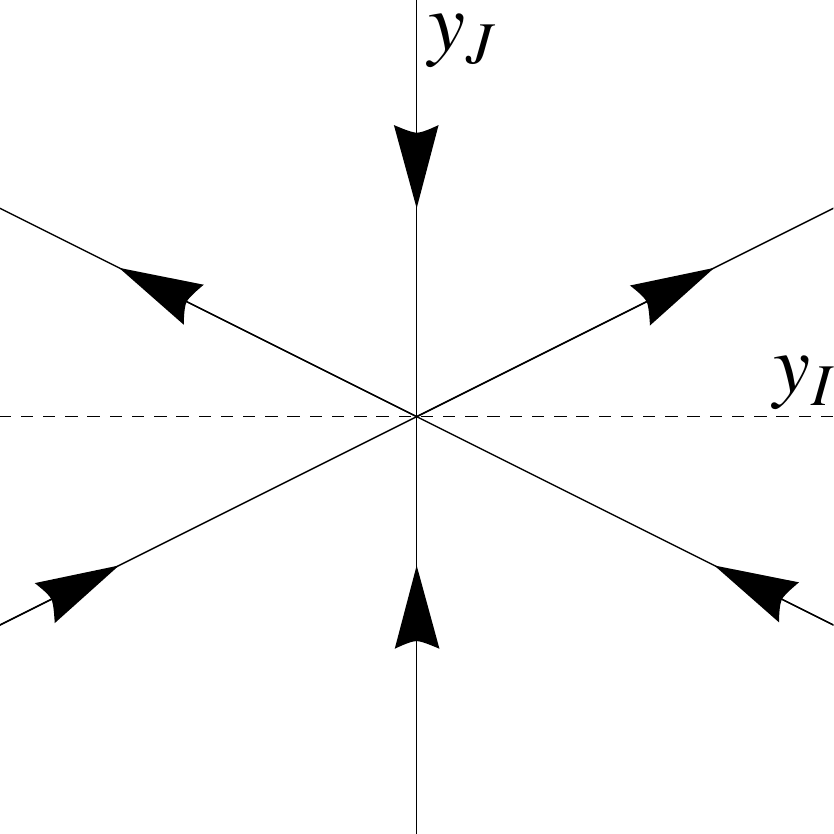}
\label{b2<0cr}
}
\subfigure[$\lambda>\lambda_c$]{
\includegraphics[scale=0.4]{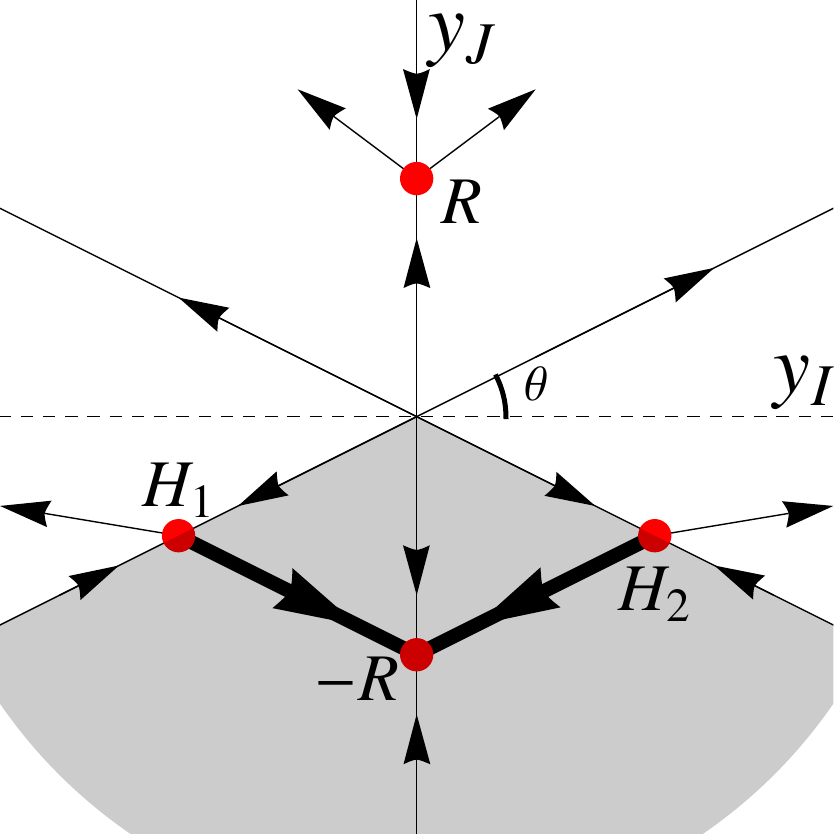}
\label{attract}
}
\caption[Optional caption for list of figures]{The transition for $b_2<0, a_1>0$ is Type-III. Here $H_1$  and  $H_2$ represent the steady states having hexagonal patterns (Figure \ref{hexa}) and $\pm R$ represent the steady states with roll patterns (Figure \ref{1interactionstructure}). The bifurcating attractor $\Sigma_{\lambda}$ on $\lambda>\lambda_c$, given in  Theorem \ref{hex.thm}  and  shown in bold in (c),  contains these steady states and the connecting  heteroclinic orbits. This attractor has the shaded sector $(\pi+\theta,2\pi-\theta)$ as the basin of attraction where $\theta=\arctan1/2$. }
\label{b2less0}
\end{figure}

\begin{figure}[ht]
\centering
\subfigure[$\lambda<\lambda_c$]{
\includegraphics[scale=0.4]{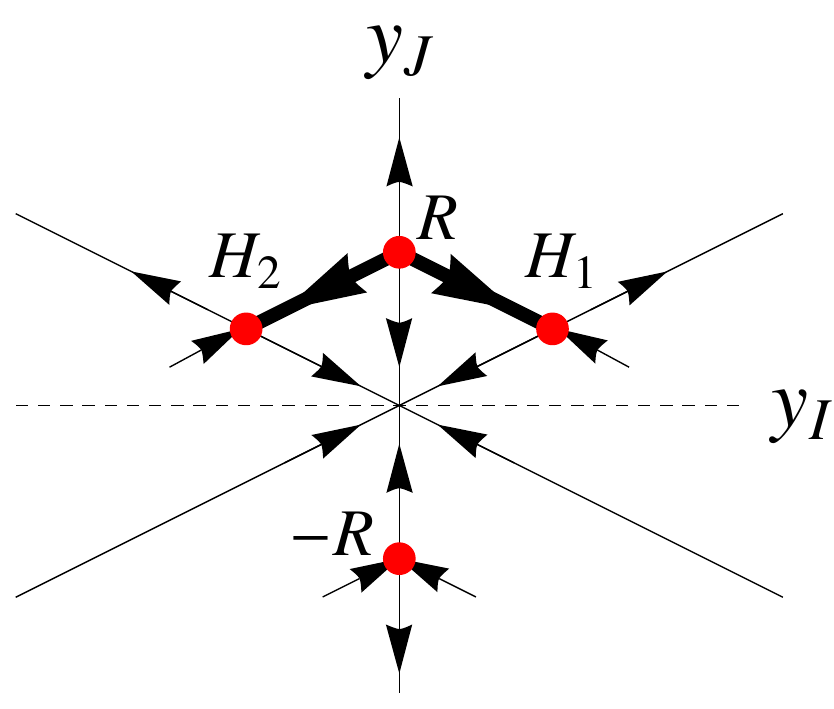}
\label{repel}
}
\subfigure[$\lambda=\lambda_c$]{
\includegraphics[scale=0.4]{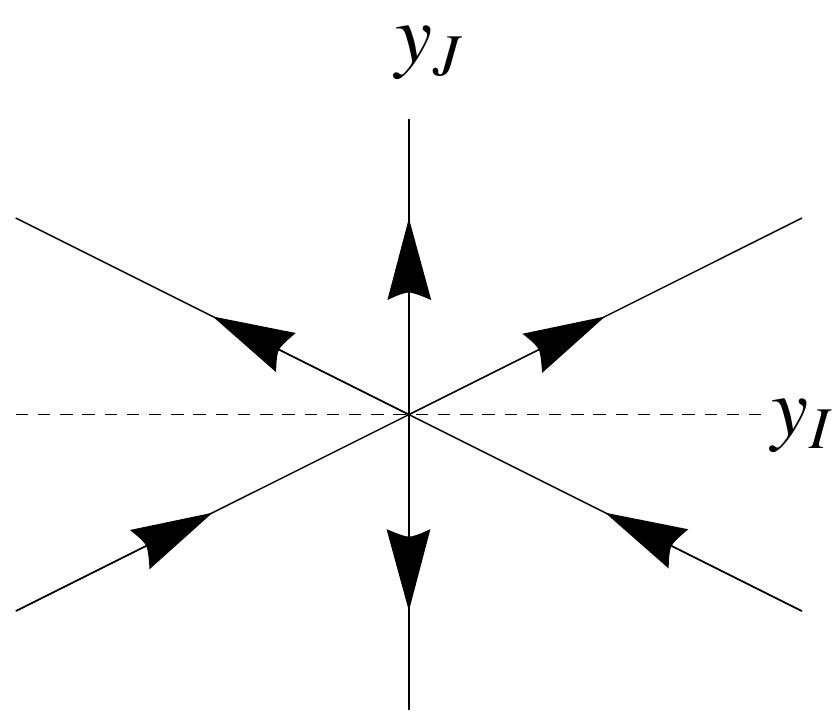}
}
\subfigure[$\lambda>\lambda_c$]{
\includegraphics[scale=0.4]{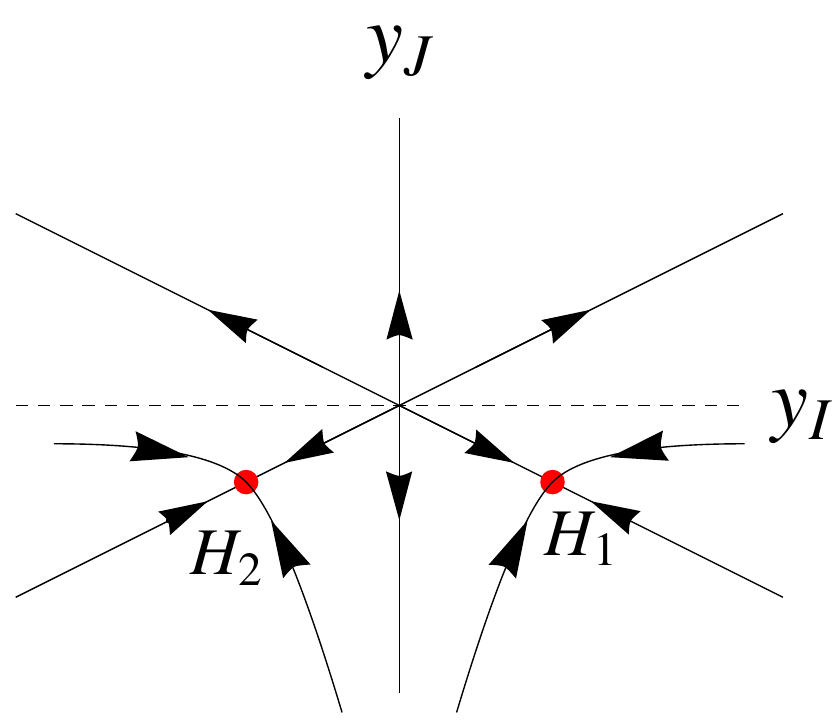}
}
\caption{The transition for $b_2>0, a_1>0$ is Type-II. The bifurcating repeller $\Sigma_{\lambda}$ on $\lambda<\lambda_c$ is shown in bold. In this case, the system undergoes a drastic transition leading to more complex patterns in another attractor away from the basic motionless state. The structure of this  attractor needs to be determined by different methods.}
\label{b2>0}
\end{figure}

\begin{theorem} \label{hex.thm}
Assume that the horizontal length scales satisfy the relation \eqref{hexbox} for some positive integers $I_x,I_y$ and the critical index set is 
$$
\mathcal{C}=\{I=(I_x,I_y),J=(0,2I_y)\}.
$$ 
Let the two parameters $b_2$ and $a_1$ be defined by \eqref{2p}. Assume $a_1>0$\footnote{In the case where $a_1<0$ the assertions given by Theorem \ref{hex.thm} hold true with the regions and the steady states flipped with respect to the $y_I$ axis.}, and 
\begin{equation}\label{steady}
\begin{aligned}
\pm &  R^{\lambda}=(y_I,y_J)=\pm(0, \sqrt{\frac{\beta(\lambda)}{-b_2}})+O(\beta(\lambda)), \\
& H_i^{\lambda}=\frac{\beta(\lambda)}{a_1}(2(-1)^i ,-1)+O(\beta(\lambda)^2), \,i=1,2, \\
\end{aligned}
\end{equation}
\begin{itemize}
\item[i)]
If $b_2<0$ then the system undergoes a Type-III transition at $\lambda=\lambda_c$ and the following assertions hold true:
\begin{itemize}
\item[a)] The topological structure of the transition is as in Figure \ref{b2less0}.
\item[b)] There is a neighborhood $\mathcal{U}$ of $\phi=0$ in $H$ such that for any $\lambda_c<\lambda<\lambda_c+\epsilon$ with some $\epsilon>0$, $\mathcal{U}$ can be decomposed into two open sets $\mathcal{U}_1^{\lambda}$, $\mathcal{U}_2^{\lambda}$,
$$\overline{\mathcal{U}}=\overline{\mathcal{U}_1^{\lambda}}\cup\overline{\mathcal{U}_2^{\lambda}}, \qquad \mathcal{U}_1^{\lambda}\cap\mathcal{U}_2^{\lambda}=\emptyset$$ such that
\begin{equation*}
\begin{aligned}
& \lim_{\lambda\rightarrow\lambda_c}\limsup_{t\rightarrow\infty} ||S_{\lambda}(t,\varphi)||_H=0 && \qquad \forall \varphi\in\mathcal{U}_1^{\lambda}, \\
& \limsup_{t\rightarrow\infty} ||S_{\lambda}(t,\varphi)||_H\geq\delta>0&& \qquad \forall \varphi\in\mathcal{U}_2^{\lambda},
\end{aligned}
\end{equation*}
for some $\delta>0$. Here $S_{\lambda}$ is the evolution of the solution with initial data $\varphi$.
Moreover $\mathcal{P}(U_1^\lambda)$ is a sectorial region and $\mathcal{P}(U_2^\lambda)$ consists of two sectorial regions given by:
\begin{equation}
\begin{aligned}
& \mathcal{P}(\mathcal{U}_1^\lambda)=\mathcal{U}\cap \{x\in \mathbb{R}^2\mid \pi+\theta<arg(x)<2\pi-\theta\},\\
& \mathcal{P}(\mathcal{U}_2^\lambda)=\mathcal{U}\cap\{x\in \mathbb{R}^2\mid -\theta<arg(x)<\pi/2 \text{ or } \pi/2<arg(x)<\pi+\theta\},
\end{aligned}
\end{equation}
where $\theta=\arctan{1/2}$ and $\mathcal{P}$ is the projection onto $\phi_{(I,1)}$, $\phi_{(J,1)}$ plane.
\item[c)] The system bifurcates to an attractor $\Sigma_{\lambda}$ which consists of three steady states $H_1^{\lambda},H_2^{\lambda},-R^{\lambda}$ and heteroclinic orbits connecting $-R^{\lambda}$ to $H_1^{\lambda}$ and $-R^{\lambda}$ to $H_2^{\lambda}$. Namely, $\Sigma_{\lambda}$ is the arc connecting these three steady states as shown in Figure \ref{attract}, and has basin of attraction $\mathcal{U}_1^{\lambda}$.
\end{itemize}
\item[ii)] If $b_2>0$ then the system undergoes a Type-II transition at $\lambda=\lambda_c$ and the following assertions are true:
\begin{itemize}
\item[a)] The topological structure of the transition is as given by Figure~\ref{b2>0}. \item[b)] There is a bifurcating repeller $\Sigma_{\lambda}$ on $\lambda<\lambda_c$ which consists of three steady states, $H_1^{\lambda}, H_2^{\lambda}, R^{\lambda}$ and the heteroclinic orbits connecting $R^{\lambda}$ to $H_1^{\lambda}$ and $H_2^{\lambda}$ respectively. $\Sigma_{\lambda}$, topologically, is as shown in Figure~\ref{repel}.
\item[c)] Finally for $\lambda_c+\epsilon>\lambda>\lambda_c$ there is an open neighborhood $\mathcal{U}$ of $\phi=0$ and a dense, open subset $\mathcal{U}^{\lambda}$ of $\mathcal{U}$ such that
$$ \limsup_{t\rightarrow\infty} ||S_{\lambda}(t,\varphi)||_H\geq\delta>0, \qquad \forall \varphi\in\mathcal{U}^{\lambda},
$$
for some $\delta>0$.
\end{itemize}
\end{itemize}
\end{theorem}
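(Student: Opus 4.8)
The plan is to push the center-manifold reduction of Section~4 down to the two-dimensional unstable subspace $E_1=\mathrm{span}\{\phi_{(I,1)},\phi_{(J,1)}\}$, to analyse completely the resulting planar vector field, and then to lift the ODE picture back to the PDE by means of the center-manifold theorem together with the attractor-/repeller-bifurcation and structural-stability transfer theorems of Ma and Wang \cite{ptd}. The essential difficulty will be the \emph{global} part of the planar analysis: the two bifurcated families of steady states sit at different amplitude scales --- $O(\beta)$ for the hexagons $H_i^\lambda$ and $O(\beta^{1/2})$ for the rolls $\pm R^\lambda$ --- which is precisely the ``degenerate quadratic'' situation flagged in the Introduction.

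\emph{Step 1: the reduced planar system.} Specializing \eqref{reduced2} to $\mathcal C=\{I,J\}$, $I=(I_x,I_y)$, $J=(0,2I_y)$, and writing $y=(y_I,y_J)$, I would first use the index bookkeeping \eqref{S}-\eqref{ortho} to determine which coefficients survive. Because $I_x\neq0$ (forced by $\alpha_I=\alpha_J$), the index $I$ occurs among the self-interactions of $\mathcal C$ only through the product $\phi_{(I,1)}\phi_{(J,1)}$ and the index $J$ only through $\phi_{(I,1)}\phi_{(I,1)}$; hence the sole quadratic terms are $a_1 y_I y_J$ in the $\dot y_I$-equation and a term $b_1y_I^2$ in the $\dot y_J$-equation. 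The lateral reflection symmetry of the box acts on $E_1$ as $y_I\mapsto-y_I$, $y_J\mapsto y_J$, so $\dot y_I$ is odd and $\dot y_J$ even in $y_I$; in particular the $y_J$-axis is invariant. Evaluating the cubic coefficients via \eqref{centerformula}-\eqref{Phi_real}, the one that matters is the $y_J^3$-coefficient, which is exactly the quantity $c_J$ of \eqref{sigma}, i.e. $b_2$. A short computation with the explicit eigenfunctions and $\alpha_I=\alpha_J$ yields the structural identity $a_1=4b_1$ --- equivalently, the $2$ in the hexagon locus $y_I=\pm2y_J$. Thus, modulo $o(3)$,
\begin{equation*}
\dot y_I=\beta y_I+a_1 y_I y_J+(\text{cubic, odd in }y_I),\qquad
\dot y_J=\beta y_J+\tfrac{a_1}{4}\,y_I^2+b_2 y_J^3+(\text{cubic, even in }y_I),
\end{equation*}
where $\beta=\beta_{(I,1)}(\lambda)=\beta_{(J,1)}(\lambda)$ satisfies \eqref{PES1}. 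Note the degeneracy: at quadratic order the entire $y_J$-axis consists of equilibria, and it is the cubic term $b_2y_J^3$ that creates the rolls $\pm R^\lambda$.

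\emph{Step 2: equilibria, linearization and the global portrait.} Solving $\dot y=0$ to leading order gives exactly the trivial state, the rolls $\pm R^\lambda=\pm(0,\sqrt{\beta/(-b_2)})+O(\beta)$ (present iff $\mathrm{sign}(\beta)=\mathrm{sign}(-b_2)$), and the hexagons $H_i^\lambda=\frac{\beta}{a_1}(2(-1)^i,-1)+O(\beta^2)$, which exist for all small $\beta\neq0$ since $a_1b_1=a_1^2/4>0$ under $a_1>0$. Linearizing and using $a_1=4b_1$: the Jacobian at $\pm R^\lambda$ is, to leading order, diagonal with entries $-2\beta$ and $\beta\pm a_1\sqrt{\beta/(-b_2)}$; at $H_i^\lambda$ it is, to leading order, $\bigl(\begin{smallmatrix}0 & 2(-1)^i\beta\\ (-1)^i\beta & \beta\end{smallmatrix}\bigr)$, with trace $\beta$ and determinant $-2\beta^2<0$. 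So $H_1^\lambda,H_2^\lambda$ are always saddles with stable eigendirection tangent to the ray through the origin; for $b_2<0$, $\lambda>\lambda_c$, the state $-R^\lambda$ is a stable node and $+R^\lambda$ a saddle; for $b_2>0$, $\lambda<\lambda_c$, $+R^\lambda$ is a source and $-R^\lambda$ a saddle. The angle $\theta=\arctan(1/2)$ appears because the hexagon rays are $\arg y\in\{\pm\theta,\pi\pm\theta\}$, with $H_1^\lambda$ on $\arg y=\pi+\theta$ and $H_2^\lambda$ on $\arg y=2\pi-\theta$. The invariant $y_J$-axis and the saddle separatrices $W^s(H_1^\lambda),W^s(H_2^\lambda)$ (tangent, and for the truncated field in fact equal, to the rays $\arg y=\pi+\theta$ and $2\pi-\theta$) cut a neighborhood of the origin into the sectors of the statement. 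Using positive invariance of the subsectors so bounded, the absence of periodic orbits (a Dulac or Lyapunov-type argument), and the Poincar\'e--Bendixson theorem, I would then show: for $b_2<0$, on $\mathcal P(\mathcal U_1^\lambda)=\{\pi+\theta<\arg y<2\pi-\theta\}$ every orbit converges to $\Sigma_\lambda:=\{H_1^\lambda,H_2^\lambda,-R^\lambda\}\cup\{\text{the two orbits }H_i^\lambda\to-R^\lambda\}$ (the ``inward'' branch of each $W^u(H_i^\lambda)$ enters this sector and lands on $-R^\lambda$), so $\Sigma_\lambda=\omega(\overline{\mathcal U_1^\lambda})$ is the bifurcated attractor; while on the complementary sectors $\mathcal P(\mathcal U_2^\lambda)$ (split by $W^s(+R^\lambda)=$ positive $y_J$-axis) every orbit is funnelled toward $+R^\lambda$ and ejected with $y_J$ driven positive by the $\tfrac{a_1}{4}y_I^2$ term, hence leaves every fixed small ball. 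For $b_2>0$, the same construction for $\lambda<\lambda_c$ produces the repeller $\Sigma_\lambda=\{H_1^\lambda,H_2^\lambda,R^\lambda\}\cup\{\text{orbits }R^\lambda\to H_i^\lambda\}$, and for $\lambda>\lambda_c$ there is no attracting set near the origin (the origin is a source, $H_1^\lambda,H_2^\lambda$ are saddles, and $\pm R^\lambda$ no longer exist), so a dense open set of data escapes.

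\emph{Step 3: transfer to the PDE, and the main obstacle.} By the center-manifold theorem the manifold is exponentially attracting and tangent to $E_1$, so a small neighborhood $\mathcal U\subset H$ of $\phi=0$ splits, along the graph of $\Phi$, into $\mathcal U_1^\lambda,\mathcal U_2^\lambda$ whose projections $\mathcal P(\mathcal U_k^\lambda)$ are the sectors above; since the reduced field is $o(|y|^3)$-close to its truncation, uniformly on the (shrinking) neighborhoods in play, the ODE conclusions persist. The attractor-bifurcation, repeller-bifurcation and structural-stability theorems of \cite{ptd} then yield: in case (i) a bifurcated attractor of the PDE homeomorphic to the arc $\Sigma_\lambda$, with basin $\mathcal U_1^\lambda$ and $\Sigma_\lambda\to\{0\}$ as $\lambda\to\lambda_c$ --- which is exactly the pair of $\limsup$ alternatives in (i)(b) --- giving a Type-III transition; and in case (ii) a bifurcated repeller on $\lambda<\lambda_c$ homeomorphic to $\Sigma_\lambda$ together with the escape statement (ii)(c), giving a Type-II transition. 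The genuine difficulty is the \emph{global} planar step: proving that the claimed heteroclinic orbits exist, that the basin of $\Sigma_\lambda$ is \emph{exactly} the open sector $(\pi+\theta,2\pi-\theta)$ (and its complement the escape region) --- which hinges on the two hexagon rays being invariant lines of the truncated field --- and that no additional recurrence is hidden between the $O(\beta)$ and $O(\beta^{1/2})$ scales, all with uniform control of the $o(3)$ remainder.
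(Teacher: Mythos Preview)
Your approach is essentially that of the paper: reduce to the two-dimensional center manifold, analyze the planar truncation by locating its equilibria and invariant rays, linearize, and then classify the transition by the topological type (parabolic versus hyperbolic sectors) of the origin at $\lambda=\lambda_c$, invoking the Ma--Wang machinery for the transfer to the PDE.

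The one substantive point you leave open is precisely what the paper closes. Beyond $a_1=4b_1$, the paper also derives the identities $a_3=2b_3$ and $4a_2=a_3+b_2$ among the cubic coefficients (your unnamed ``cubic, odd/even in $y_I$'' terms). With all three relations in hand, a direct substitution $y_I=ky_J$ into the full cubic truncation shows that the lines $k=0,\pm2$ are \emph{exactly} invariant, not merely tangent to $W^s(H_i^\lambda)$ at leading order. This is what pins down the precise sectorial boundaries $\arg y\in\{\pi+\theta,2\pi-\theta\}$ in (i)(b) and dissolves the ``genuine difficulty'' you flag at the end; with only $a_1=4b_1$ your sector description would be correct only up to an $o(1)$ error in angle as $\beta\to0$, and the heteroclinic/basin claims would need a separate perturbation argument. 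So your Step~2 goes through cleanly once you add these two identities.
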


A few remarks are now in order.

1.  By Pearson\cite{Pearson1958}, in the absence of side walls, i.e. when the region extends infinitely in horizontal directions, the minimum
in (\ref{Mac}) is achieved at $\alpha_c\approx 2$ and the corresponding critical Marangoni number is $\lambda _{c}\approx 79.6$  when $\Bi=0$. 
And $\alpha _{c}$ 
increases monotonuously to approximately 3 as $\Bi
\rightarrow \infty $.

When the side walls are
present the minimum is achieved on the lattice given in (\ref{Mac}). Hence
the side walls have a stabilizing effect on the system by increasing the critical Marangoni number.

2. Let us consider first the case where a single mode becomes unstable at the critical Marangoni number. As an example, we will take the length scales to be $L_1=1.5$ and $L_2=1.0$. In this case, the critical index can be found to be $(I_x,I_y)=(1,0)$, the critical Marangoni number is $\lambda_c\approx79.82$ and the critical wave number is $\alpha_I=2\pi/3$ at $Bi=0$. This type of transition, according to Theorem \ref{single.thm} is controlled by the cubic term $\sigma$ in \eqref{1inter}. For various choices of Prandtl numbers, the value of $\sigma$ is shown in Figure \ref{1interactioncoef}. In this case we see that the transition at $\lambda_c$ is of Type-I (continuous). Moreover, by \eqref{1interstructure} the two bifurcated solutions will be a small perturbation of the critical mode $\phi_I$, one having a roll structure as shown in figure \ref{1interactionstructure} and the other rotating in the opposite direction.
\begin{figure}[tbp]
\includegraphics[scale=0.5]{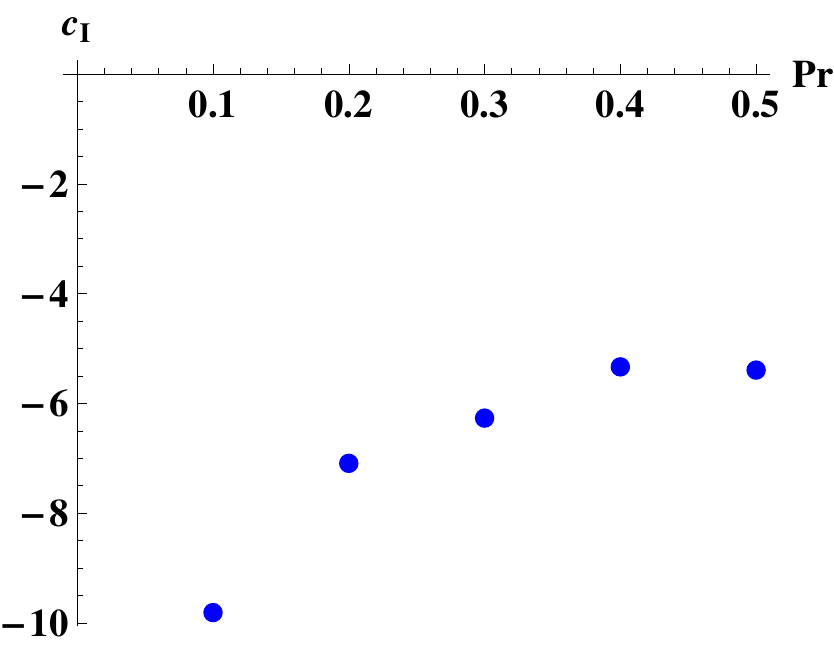}
\includegraphics[scale=0.5]{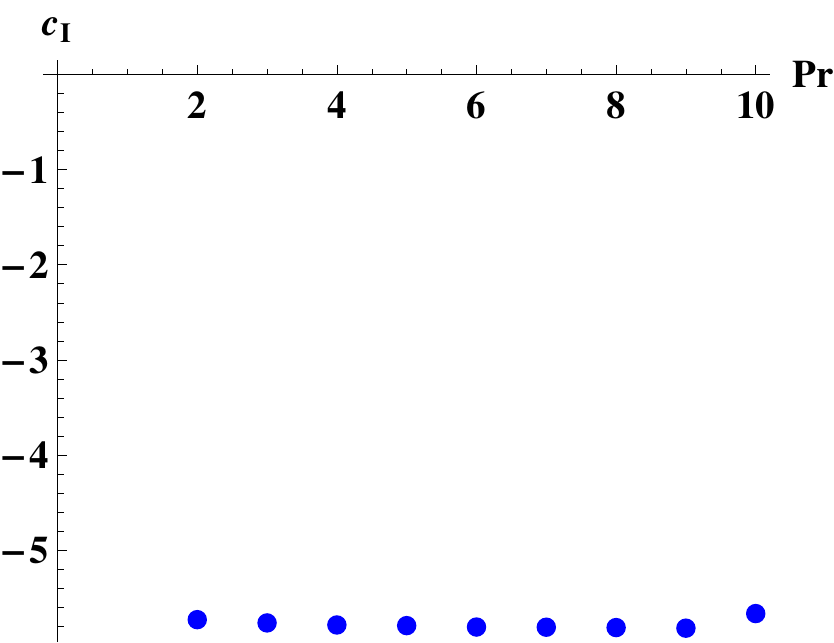}
\label{1interactioncoef}
\caption{Coefficient of the cubic term in \eqref{1inter} for various  $\Pr$ at $L_1=1.5$ and $L_2=1$ and $\Bi=0$.}
\end{figure}

\begin{figure}[tbp]
\includegraphics[scale=0.5]{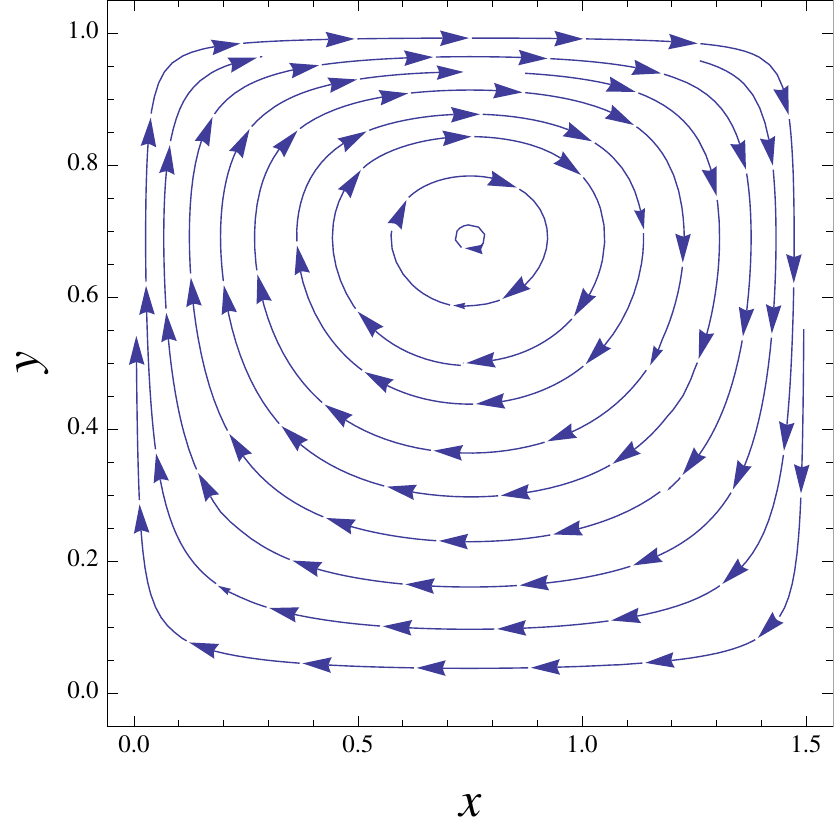}
\includegraphics[scale=0.4]{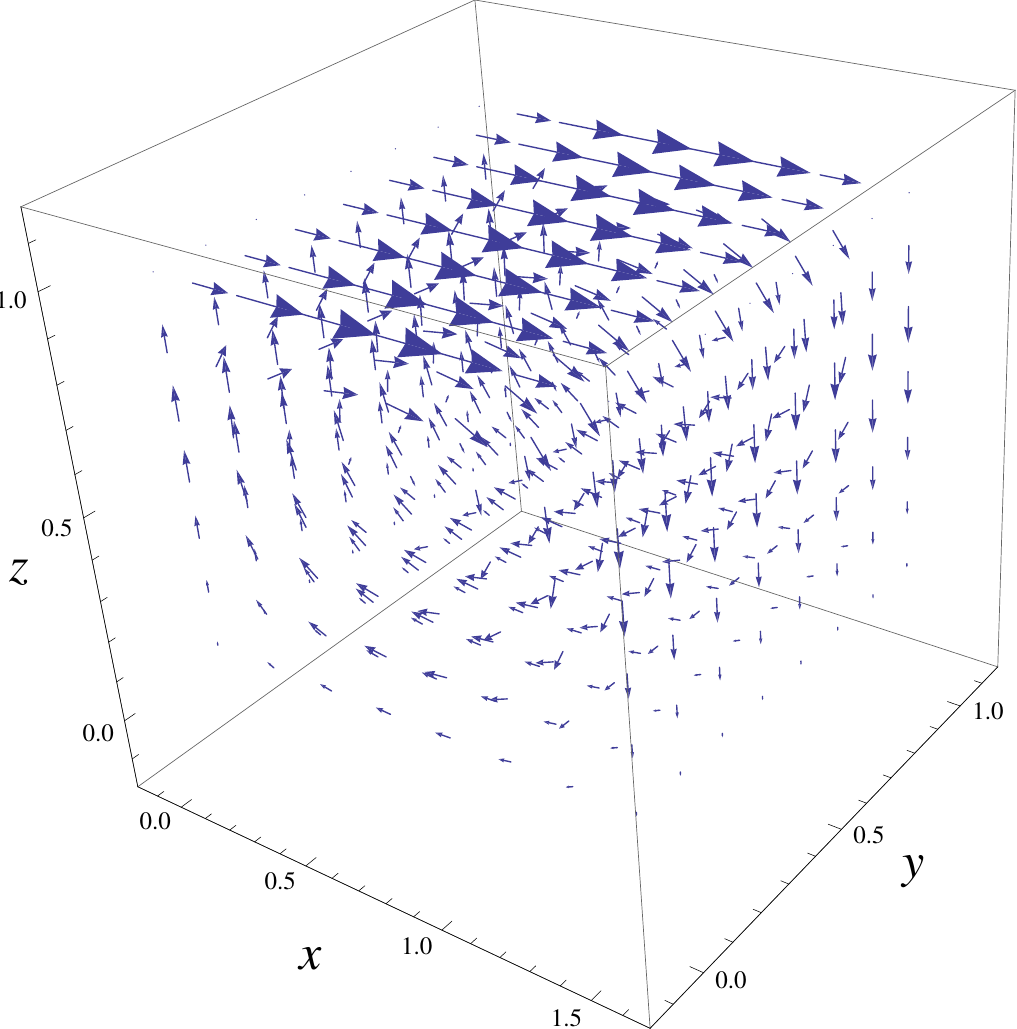}
\caption{The transition state has a time independent roll structure when $L_1=1.5$ and $L_2=1$ at Bi=0.}
\label{1interactionstructure}
\end{figure}

3. Consider a box with dimensions $L_2=3.02$ and $L_1=\frac{2L_2}{\sqrt{3}}$. Then the critical modes are found to be $I=(2,1)$ and $J=(0,2)$. At $Bi=0$, $\lambda_c\approx79.77$. The type of transition in this case depends on a number $b_2$ given by Theorem \ref{hex.thm}. We compute the values of $b_2$ for several Prandtl numbers $\Pr$ as shown in Figure \ref{b2val} which indicates a Type-III transition at $\lambda_c$. 
\begin{figure}[tbp]
\includegraphics[scale=0.5]{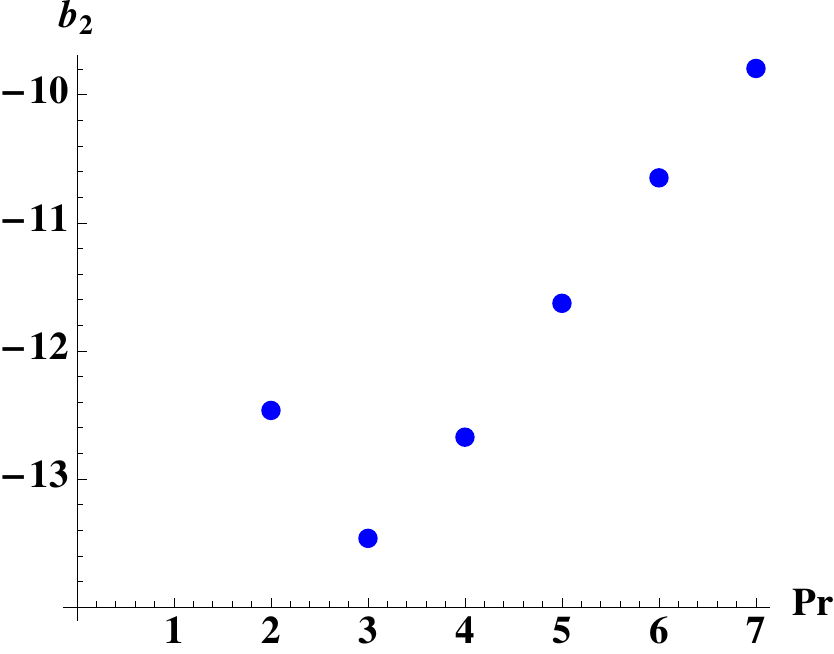}
\label{b2val}
\caption{The value of $b_2$ for various $Pr$ numbers for $L_2=3.02$ and $L_1=\frac{2L_2}{\sqrt{3}}$ and $Bi=0$. }
\end{figure}

\section{Proof of the Main Theorems}

By (\ref{1inter}),  the proof of Theorem~\ref{single.thm}   follows immediately from the standard dynamic transition theorem 
from the simple eigenvalue \cite{ptd}. We now prove Theorem \ref{hex.thm}. 
Under the assumptions, the set of quadratic interactions  becomes
$$\mathcal{S}=\{0,K,J,2I,2J,I+J,I\}$$
with $K=(2I_x,0)$ and $0=(0,0)$; see \eqref{S}. We can easily see that:
\begin{equation*}
\begin{aligned}
& G((I,i)),(I,j),(S,s))=0 && \text{if } S\neq 0,J,2I,K,\\
& G((I,i),(J,j),(S,s))=0 && \text{if } S\neq I+J,I, \\
& G((J,i),(J,j),(S,s))=0 && \text{if } S\neq 0,2J.
\end{aligned}
\end{equation*}
Thus  the center manifold function is given by 
\begin{align}
\Phi(y)=
&y_I^2 \sum_{\substack{l\geq1 \text{ and } S=0,2I,K \\l\geq2 \text{ and } S=J}} \Phi_{IIS}^l \phi_{(S,l)}\\
&+y_I y_J\sum_{\substack{l\geq2 \text{ and }S=I\\l\geq1 \text{ and }S=I+J}}\Phi_{IJS}^l\phi_{(S,l)}+y_J^2\sum_{l\geq1 \text{ and } S=0,2J} \Phi_{JJS}^l \phi_{(S,l)}, \nonumber 
\end{align}
and the reduced equation \eqref{reduced2} on the center manifold becomes, 
\begin{equation}\label{hex1}
\begin{aligned}
&
\frac{dy_I}{dt}=\beta (\lambda) y_I+a_1 y_J y_I + y_I(a_2 y_I^2+a_3 y_J^2)+o(3), \\
&
\frac{dy_J}{dt}=\beta (\lambda) y_J+b_1 y_I^2 + y_J(b_2 y_J^2+ b_3 y_I^2 )+o(3).
\end{aligned}
\end{equation}
The coefficients in \eqref{hex1} are as follows:
\begin{equation}\label{coef}
\begin{aligned}
& \beta=\beta_{(I,1)}=\beta_{(J,1)},\\
& a_2=\sum_{\substack{l\geq1 \text{ and } S=0,2I,K \\l\geq2 \text{ and } S=J}} H((I,1),(S,l),(I,1)) \Phi_{IIS}^l, \\
& a_3=\sum_{l\geq 1} G((I,1),(0,l),(I,1))\Phi_{JJ0}^l\\
    & \qquad +\sum_{\substack{l\geq2 \text{ and }S=I\\l\geq1 \text{ and }S=I+J}}H((J,1),(i,l),(I,1))\Phi_{IJS}^l, \\
& b_1=G((I,1),(I,1),(J,1)),  \\
& b_3=\sum_{l\geq1}G((J,1),(0,l),(J,1))\Phi_{II0}^l\\
& \qquad +\sum_{\substack{l\geq2 \text{ and }S=I\\l\geq1 \text{ and }S=I+J}}H((I,1),(i,l),(J,1)) \Phi_{IJS}^l.\\
\end{aligned}
\end{equation}
The coefficients $a_1$ and $b_2$ are given by \eqref{2p}.
The following relations can be found between the coefficients of \eqref{hex1}:
\begin{equation}\label{hexrel}
a_1=4b_1, \qquad a_3=2b_3, \qquad 4a_2=a_3+b_2.
\end{equation}
The reduced equations \eqref{hex1} are similar to those found by Dauby et al. \cite{Dauby1993}.
It is known that the type of transition is determined by the equation \eqref{hex1} at $\lambda=\lambda_c$. 
For this purpose, we look for the straight line orbits of \eqref{hex1} at $\lambda=\lambda_c$. Setting $y_I=k y_J$ at $\lambda=\lambda_c$ in \eqref{hex1} we find:
$$
k=\frac{dy_I}{dy_J}=\frac{a_1 k y_J^2+ y_J^3(a_2 k^3+a_3 k)}{b_1 k^2 y_J^2+y_J^3(b_2+k^2 b_3)}.
$$
Using the relations \eqref{hexrel} we find that there are six straight line orbits on the lines $k=0, \pm 2$. Also the bifurcated steady state solutions are given by \eqref{steady}. 

Now let us denote the Jacobian matrix of the vector field \eqref{hex1} by $J$. Then we find the eigenvalues of $J$ at $\pm R^{\lambda}$ to be 
\begin{equation}\label{pmR}
\left\{\pm a_1\sqrt{\frac{-\beta}{b_2}} +\beta(1-\frac{a_3}{b_2})+o(\beta),\,-2\beta+o(\beta)\right\}
\end{equation}
and the eigenvalues of $J$ at $(H_i^{\lambda})$ to be $\{2\beta+O(\beta^2),\,-\beta+O(\beta^2)\}$, $i=1,2$. This analysis shows that all the singular points are non-degenerate and moreover $H_i^{\lambda}$ are always saddle points.
As is well-known the type of transition is dictated by the reduced equation \eqref{hex1} at $\lambda=\lambda_c$.

\begin{itemize}
\item[I.]
If $b_2<0$, then at $\lambda=\lambda_c$, the origin has a parabolic region and four hyperbolic regions as shown in Figure~\ref{b2<0cr}. Thus the transition is Type-III at $\lambda=\lambda_c$ . $\pm R^{\lambda}$  are two steady states bifurcated from $(0,0)$ on $\lambda>\lambda_c$ and have the structure of rolls. By \eqref{pmR}, we find that $-R^{\lambda}$ is an attractor and $+R^{\lambda}$ is a saddle for $\beta>0$ small. The topological structure is as shown in Figure \ref{b2less0}. In the case $a_1<0$ we get the symmetric picture with respect to $y_I$ axis.

\item[II.]
 On the other hand we find that if $b_2>0$ then all the regions are hyperbolic at $\lambda=\lambda_c$ and thus the transition is of Type-II. 
\end{itemize}

\section{Well-Posedeness}

We will assume that the readers are familiar with the usual Sobolev spaces $H^{m}\left( \Omega \right) $ and the $L^{p}\left( \Omega \right)$ spaces.
For convenience, we will use $\left\vert \cdot \right\vert $ and $\left(\cdot ,\cdot \right) $ to indicate the standard norm and inner product in $%
L^{2}\left( \Omega \right) $.  The upper boundary ($z=1$) will be denoted by $\Gamma $.

\subsection{3D Case}

For the functional setting of the problem we follow Foias-Manley-Temam \cite{Foias1987}. 
First we define the relevant function spaces:
\begin{equation*}
\begin{aligned} 
& V_{1} =\left\{ \mathbf{u}\in H^{1}\left( \Omega \right)
^{3}:\nabla\cdot\mathbf{u}=0\text{, }\mathbf{u}\cdot n=0,\mathbf{u}\mid
_{z=0}=0\right\} , \\ 
& V_{2} =\left\{ \theta \in H^{1}\left( \Omega \right)
:\theta \mid _{z=0}=0\right\} ,\\ 
& H_{1} =\left\{ \mathbf{u}\in L^{2}\left(
\Omega \right) ^{3}:\nabla\cdot\mathbf{u}=0,\mathbf{u}\cdot n=0\right\} , H_{2} =L^{2}\left( \Omega \right),\\  
& V=V_{1}\times V_{2} \qquad H=H_{1}\times H_{2}, 
\end{aligned}
\end{equation*}
For $\phi =\left( \mathbf{u},\theta \right) \in V$, the linear operator $%
L_{\lambda }:V\rightarrow H$ is defined by 
\begin{equation}
\left( L_{\lambda }\phi ,\widetilde{\phi }\right) =-\text{Pr}\left( \nabla 
\mathbf{u},\nabla \widetilde{\mathbf{u}}\right) -\left( \nabla \theta
,\nabla \widetilde{\theta }\right) +\left( w,\widetilde{\theta }\right)
-\int_{\Gamma }\left( \text{Bi}\theta \widetilde{\theta }+\lambda \text{Pr}%
\nabla \theta \cdot \widetilde{\mathbf{u}}\right) .  \label{linearop}
\end{equation}%
Since the boundary integral term in (\ref{linearop}) contains $\nabla \theta 
$, it is not clear whether $L_{\lambda }$ is well defined on $V$ since $%
\nabla \theta \notin L^{2}\left( \Gamma \right) $ for general $\theta \in
V_{2}$. However this is true as it is shown in (\ref{trick}) that 
\begin{equation*}
\int_{\Gamma }\nabla \theta \cdot \widetilde{\mathbf{u}}=\int_{\Omega
}\nabla \theta \cdot \partial _{z}\widetilde{\mathbf{u}}.
\end{equation*}%
The bilinear operator $G:V\times V\rightarrow H$ is defined by%
\begin{equation}  \label{bilinearop}
G\left( \phi ,\widetilde{\phi }\right) =-\left( \mathcal{P}\left( \mathbf{u}%
\cdot \nabla \right) \widetilde{\mathbf{u}},\left( \mathbf{u}\cdot \nabla
\right) \widetilde{\theta }\right) .
\end{equation}%
where $\mathcal{P}$ denotes the Leray projector onto the divergence free
vector fields. The nonlinear operator will be denoted by, 
\begin{equation*}
G\left( \phi \right) =G\left( \phi ,\phi \right) .
\end{equation*}

Now we can write the problem in the abstract form as,%
\begin{equation}  \label{abstract}
\begin{aligned} & \frac{d\phi }{dt} =L_{\lambda }\phi +G\left( \phi \right)
\text{,\thinspace \thinspace \thinspace\ }\phi \in V, \\ & \phi \left(
0\right) =\phi _{0}. \end{aligned}
\end{equation}
First we will need to ensure the well-posedeness of the problem (\ref%
{abstract}). Namely,

\begin{problem}
\label{weakex} For fixed $\tau>0$ and given $\phi _{0}\in H$, find 
\begin{equation*}
\phi \in L^{2}\left( 0,\tau ;V\right) \cap L^\infty\left( 0,\tau ;H\right), \quad \phi_t \in L^2(0, \tau; (V \cap H^2)').
\end{equation*}%
satisfying
\begin{equation}  \label{weakform}
\begin{aligned} 
& \left(\frac{\partial \mathbf{u}}{\partial t}
,\widetilde{\mathbf{u}}\right) +\int \left( \mathbf{u}\cdot \nabla
\right) \mathbf{u}\cdot \widetilde{\mathbf{u}}+\Pr\left( \nabla
\mathbf{u},\nabla \widetilde{\mathbf{u}}\right) =-\Pr\,\lambda
\int_{\Gamma }\nabla \theta \cdot \widetilde{\mathbf{u}}, \\ 
& \left(\frac{\partial \theta}{\partial t}  ,\widetilde{\theta }\right) +\int
\left( \mathbf{u}\cdot \nabla \right) \theta \,\widetilde{\theta }+\left(
\nabla \theta ,\nabla \widetilde{\theta }\right) =\left( w,\theta \right)
-\text{Bi}\int_{\Gamma }\theta \widetilde{\theta }, 
\end{aligned}
\end{equation}
for every $\left( \widetilde{\mathbf{u}},\widetilde{\theta }\right) \in V$.
\end{problem}

\begin{theorem} \label{Wellposed1}
Given any $\phi_0 \in H$ there exists a solution for Problem \ref{weakex}.
\end{theorem}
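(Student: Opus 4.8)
The plan is to prove existence of a weak solution to Problem \ref{weakex} by the Faedo--Galerkin method, following the standard scheme for the Navier--Stokes and Boussinesq systems as in Foias--Manley--Temam \cite{Foias1987}, with the one genuine novelty being the boundary integral $\int_\Gamma \nabla\theta\cdot\widetilde{\mathbf u}$ appearing in \eqref{weakform}. First I would choose a basis of $V$; the natural choice here is the eigenfunctions $\phi_{(I,k)}$ of the linear operator $L_{\lambda_c}$ (or simply of the Stokes operator in the first component and the Laplacian with the mixed Robin condition in the second), which are smooth and satisfy all the boundary conditions, so that the approximate solutions $\phi_m=\sum_{j=1}^m g_{jm}(t)\phi_j$ are smooth enough to make every term below meaningful. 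Projecting \eqref{abstract} onto $\mathrm{span}\{\phi_1,\dots,\phi_m\}$ gives a system of ODEs for the $g_{jm}$ with quadratic nonlinearity; local existence is immediate from Picard--Lindelöf, and global existence on $[0,\tau]$ follows once the a priori estimate below is in hand.

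The core of the argument is the a priori energy estimate. Testing the velocity equation in \eqref{weakform} with $\mathbf u_m$ and the temperature equation with $\theta_m$ and adding, the trilinear terms $\int(\mathbf u_m\cdot\nabla)\mathbf u_m\cdot\mathbf u_m$ and $\int(\mathbf u_m\cdot\nabla)\theta_m\,\theta_m$ vanish by the divergence-free condition and the boundary conditions (this uses $\mathbf u\cdot n=0$ on $\partial\Omega$). One is left with
\begin{equation*}
\frac12\frac{d}{dt}\big(|\mathbf u_m|^2+|\theta_m|^2\big)+\Pr|\nabla\mathbf u_m|^2+|\nabla\theta_m|^2=(w_m,\theta_m)-\mathrm{Bi}\int_\Gamma\theta_m^2-\Pr\,\lambda\int_\Gamma\nabla\theta_m\cdot\mathbf u_m.
\end{equation*}
The term $(w_m,\theta_m)$ is absorbed by Cauchy--Schwarz and Young into the dissipation; the Biot term has a good sign and is harmless. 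The delicate term is the last one: as the paper itself notes after \eqref{linearop}, $\nabla\theta$ need not lie in $L^2(\Gamma)$, so one cannot estimate $\int_\Gamma\nabla\theta_m\cdot\mathbf u_m$ directly as a boundary integral. The key trick, identity \eqref{trick} of the paper, is to rewrite it as a volume integral: since $\mathbf u_m|_{z=0}=0$ and using $\nabla\cdot\mathbf u_m=0$,
\begin{equation*}
\int_\Gamma\nabla\theta_m\cdot\mathbf u_m=\int_\Gamma\nabla_H\theta_m\cdot(u_m,v_m)=\int_\Omega\partial_z\big(\nabla_H\theta_m\cdot(u_m,v_m)\big)=\int_\Omega\nabla_H\theta_m\cdot\partial_z(u_m,v_m)+\int_\Omega\partial_z\nabla_H\theta_m\cdot(u_m,v_m),
\end{equation*}
and an integration by parts in the horizontal variables moves the remaining $z$--derivative off $\theta_m$; in any case one ends with $\big|\int_\Gamma\nabla\theta_m\cdot\mathbf u_m\big|\le C|\nabla\theta_m|\,|\nabla\mathbf u_m|$, which by Young's inequality is bounded by $\tfrac12|\nabla\theta_m|^2+\tfrac14\Pr|\nabla\mathbf u_m|^2$ plus nothing else — it is already controlled by the dissipation with no need for a Gronwall factor. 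Collecting terms gives
\begin{equation*}
\frac{d}{dt}\big(|\mathbf u_m|^2+|\theta_m|^2\big)+c\big(|\nabla\mathbf u_m|^2+|\nabla\theta_m|^2\big)\le C\big(|\mathbf u_m|^2+|\theta_m|^2\big),
\end{equation*}
and Gronwall yields the uniform bounds $\phi_m$ bounded in $L^\infty(0,\tau;H)\cap L^2(0,\tau;V)$, uniformly in $m$.

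With these bounds, I would estimate $\partial_t\phi_m$: testing against $\widetilde\phi\in V\cap H^2$ and using the uniform $L^2(0,\tau;V)$ bound together with the standard estimate $|G(\phi_m,\phi_m)|_{(V\cap H^2)'}\le C|\mathbf u_m|\,|\nabla\mathbf u_m|$ (from $|\langle(\mathbf u\cdot\nabla)\widetilde{\mathbf u},\widetilde\phi\rangle|\le C\|\mathbf u\|_{L^2}\|\nabla\widetilde{\mathbf u}\|_{L^2}\|\widetilde\phi\|_{H^2}$ after an integration by parts moving the derivative onto the smooth test function), one gets $\partial_t\phi_m$ bounded in $L^2(0,\tau;(V\cap H^2)')$; the boundary term is handled again via the volume rewriting. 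Now Banach--Alaoglu gives a subsequence with $\phi_m\rightharpoonup\phi$ weakly in $L^2(0,\tau;V)$ and weak-$*$ in $L^\infty(0,\tau;H)$, and $\partial_t\phi_m\rightharpoonup\partial_t\phi$ in $L^2(0,\tau;(V\cap H^2)')$; the Aubin--Lions--Simon lemma upgrades this to strong convergence in $L^2(0,\tau;H)$, which is exactly what is needed to pass to the limit in the quadratic nonlinearities and in the bilinear/boundary terms against a fixed test function from the basis, and then by density for all of $V$. Finally $\phi(0)=\phi_0$ follows from the $C([0,\tau];(V\cap H^2)')$-continuity of $\phi$ and convergence of the initial data of the Galerkin scheme. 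I expect the main obstacle to be the rigorous handling of the boundary term $\Pr\,\lambda\int_\Gamma\nabla\theta\cdot\widetilde{\mathbf u}$ — both showing it is well defined via identity \eqref{trick} and controlling it by the dissipation rather than by a Gronwall term — since this is the only place the problem departs from the textbook Boussinesq well-posedness; everything else (the cancellation of the trilinear terms, the Galerkin limit, Aubin--Lions) is routine.
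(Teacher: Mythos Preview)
Your overall plan---Galerkin approximation, cancellation of the trilinear terms, the volume rewriting of the Marangoni boundary integral via identity \eqref{trick}, and Aubin--Lions to pass to the limit---is exactly the paper's route. But there is a genuine gap in your energy estimate.

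You claim that after using \eqref{trick} the term $\Pr\,\lambda\int_\Gamma\nabla\theta_m\cdot\mathbf u_m=\Pr\,\lambda(\nabla\theta_m,\partial_z\mathbf u_m)$ can be bounded by Young as $\tfrac12|\nabla\theta_m|^2+\tfrac14\Pr|\nabla\mathbf u_m|^2$ and hence absorbed into the dissipation $\Pr|\nabla\mathbf u_m|^2+|\nabla\theta_m|^2$. This is false for general parameters: from $\Pr\,\lambda\,|\nabla\theta_m|\,|\nabla\mathbf u_m|$ any Young splitting produces coefficients whose product is $\tfrac14\Pr^2\lambda^2$, so absorbing both pieces simultaneously requires $\Pr\,\lambda^2<4$. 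Since the physically relevant regime has $\lambda$ near $\lambda_c\approx 80$, your unweighted estimate does not close, and there is no way to push this cross term into a Gronwall right-hand side because it involves only gradients.

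The paper's fix is to test the velocity equation with $\Pr^{-1}\mathbf u$ and the temperature equation with $\gamma\,\theta$ for a weight $\gamma>\max\{\lambda^2,\Pr^{-1}\}$. The resulting dissipation is $|\nabla\mathbf u|^2+\gamma|\nabla\theta|^2$, and now
\[
\lambda\,|(\nabla\theta,\partial_z\mathbf u)|\le \tfrac12|\nabla\mathbf u|^2+\tfrac12\lambda^2|\nabla\theta|^2\le \tfrac12\bigl(|\nabla\mathbf u|^2+\gamma|\nabla\theta|^2\bigr),
\]
which is absorbed cleanly; the remaining term $\gamma(w,\theta)$ goes into the Gronwall factor. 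With this single modification your argument goes through as written.
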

\begin{proof}
The difficulty in obtaining the well-posedeness is that the boundary integral
in the first equation of \eqref{weakform} can not be controlled by the
remaining terms. Using the boundary conditions and the divergence
free condition,we can obtain the following:
\begin{equation}  \label{trick}
\begin{aligned} \int_{\Gamma }\nabla \theta \cdot \widetilde{\mathbf{u}}
&=\int_{\Gamma }\widetilde{u}\partial _{x}\theta +\widetilde{v}\partial
_{y}\theta =-\int_{\Gamma }\theta \left( \partial _{x}\widetilde{u}+\partial
_{y}\widetilde{v}\right) =\int_{\Gamma }\theta \partial _{z}\widetilde{w} \\
&=\int_{\partial \Omega }\theta \partial _{z}\widetilde{\mathbf{u}}\cdot
n=\int_{\Omega }\nabla \theta \cdot \partial _{z}\widetilde{\mathbf{u}}.
\end{aligned}
\end{equation}
Then formally putting $\mathbf{u}=\widetilde{\mathbf{u}}$ and $\theta =%
\widetilde{\theta }$ in (\ref{weakform}) and multiplying the second equation
in (\ref{weakform}) by $\gamma $ which has to be chosen properly, using (%
\ref{trick}) and thanks to the fact that nonlinear terms vanish, we get
\begin{equation} \label{3d.1}
\begin{split}
\frac{1}{2}\frac{d}{dt}\left( \text{Pr}^{-1}\left\vert \mathbf{u}\right\vert
^{2}+\gamma \left\vert \theta \right\vert ^{2}\right) &+\left( \left\vert
\nabla \mathbf{u}\right\vert ^{2}+\gamma \left\vert \nabla \theta
\right\vert ^{2}\right) +\gamma\, \text{Bi}\int_{\Gamma }\theta ^{2}\\ &=\gamma
\left( w,\theta \right) -\lambda \left( \nabla \theta ,\partial _{z}\mathbf{u%
}\right) .
\end{split}
\end{equation}
Choosing $\gamma>\max\{\lambda^2,\Pr^{-1}\}$, we can estimate the terms in the right handside of \eqref{3d.1} as
\begin{equation*}
\begin{aligned}
& \lambda |(\nabla \theta ,\partial _{z}\mathbf{u})| \leq \frac{1}{2}(|\nabla \mathbf{u}|^2+\lambda^2 |\nabla \theta|^2) \leq  \frac{1}{2}(|\nabla \mathbf{u}|^2+\gamma |\nabla \theta|^2) \\
& \gamma |(w,\theta)| \leq  \frac{1}{2}(\Pr^{-1} |\mathbf{u}|^2+\gamma^2 \Pr |\theta|^2) \leq \frac{ \gamma\, \Pr}{2} (\Pr^{-1} |\mathbf{u}|^2+ \gamma |\theta|^2)
\end{aligned}
\end{equation*}
So \eqref{3d.1} becomes
\begin{equation}  \label{3d.2}
\frac{d}{dt} ( \Pr^{-1}|\mathbf{u}|^{2}+\gamma | \theta| ^{2}) +(|
\nabla \mathbf{u}|^{2}+\gamma |\nabla \theta|^{2}) \leq c \, (\Pr^{-1} |\mathbf{u}|^2+ \gamma |\theta|^2)
\end{equation}
where $c=\gamma \,\Pr$. By Gronwall's Lemma, the above inequality gives
\begin{equation} \label{3d.est.1}
 \sup_{0\leq t \leq \tau} (\Pr^{-1}|\mathbf{u}|^{2}+\gamma | \theta| ^{2}) \leq \exp(c\, \tau)( \Pr^{-1}|\mathbf{u}_0|^{2}+\gamma | \theta_0| ^{2}) \\
\end{equation}
Integrating \eqref{3d.2} in time and using the above estimate,
\begin{equation} \label{3d.est.2}
 \int_0^\tau (\Pr^{-1}|\nabla \mathbf{u}|^{2}+\gamma |\nabla \theta|^{2})dt \leq c\, \tau \exp(c \, \tau)( \Pr^{-1}|\mathbf{u}_0|^{2}+\gamma | \theta_0| ^{2})
 \end{equation}
From the estimates \eqref{3d.est.1} and \eqref{3d.est.2}, the assertion of the theorem follows by standard Galerkin approximation as in the discussions of Navier-Stokes equations in Temam \cite{temam2001} .
\end{proof}
\subsection{2D Case}

Here we will prove the existence of the strong solutions in 2D.
Also we will prove that in 2D the system possesses a global attractor, a
compact set in $H$ invariant under the flow which attracts every bounded set
in $H$.

For the 2-dimensional case an equivalent formulation of the problem (\ref%
{main})-(\ref{bc}) can be given using the stream function formulation. The
reason we prefer this formulation is that we can obtain higher order energy
estimates without boundary terms in this case, see estimate (\ref{eq1}). In
this formulation we have:%
\begin{equation}  \label{streamfunc}
\begin{aligned} & \partial _{t}\Delta \psi +\left\{ \psi ,\Delta \psi
\right\} =\text{Pr}\Delta ^{2}\psi , \\ & \partial _{t}\theta +\left\{ \psi
,\theta \right\} =\partial _{x}\psi +\Delta \theta , \\ & \psi \left(
0\right) =\psi _{0}\text{, \medskip} \theta \left( 0\right) =\theta _{0}.
\end{aligned}
\end{equation}%
The domain is $\Omega =$ $\left\{ \left( x,z\right) \in \left( 0,L\right)
\times \left( 0,1\right) \right\} \subset \mathbb{R}^{2}$. Here, $\psi $ is
the stream function and $\theta $ is the temperature representing a small
perturbation from the basic state. The Jacobian is defined as%
\begin{equation*}
\left\{ f,g\right\} =\partial _{x}f\,\partial _{z}g-\partial _{z}f\,\partial
_{x}g.
\end{equation*}
For simplicity we assume slightly different boundary conditions than the 3D
case. Namely, we set $\text{Bi}=0$ at the upper boundary and we choose
free-slip boundary for the velocity at the bottom: 
\begin{equation}  \label{streamfuncBC}
\begin{aligned} & \psi =\Delta \psi =\partial _{x}\theta =0 &&\text{ at
}x=0,L. \\ & \psi =\Delta \psi =\theta =0 &&\text{ at }z=0. \\ & \psi
=\partial _{z}\theta =\Delta \psi -\lambda \partial _{x}\theta =0 &&\text{
at }z=1. \end{aligned}
\end{equation}

As before let us denote the upper boundary by $\Gamma $. Notice that if (\ref%
{streamfuncBC}) is satisfied for smooth $\left( \psi ,\theta \right) $ then $%
\partial_{n} \Delta \psi \mid _{\Gamma }=0$. Thus we define the following
function spaces: $H_{1}=H^{2}\left( \Omega \right) $, $H_{2}=L^{2}\left(
\Omega \right) $, $H=H_{1}\times H_{2}$, $V=V_{1}\times V_{2},$%
\begin{align*}
& V_{1} =\left\{ \psi \in H^{3}\left( \Omega \right) :\psi \mid _{\partial
\Omega }=0,\Delta \psi \mid _{\partial \Omega \backslash \Gamma }=0\right\} ,
\\
& V_{2} =\left\{ \theta \in H^{1}\left( \Omega \right) : \theta \mid
_{z=0}=0\right\} , \\
& D_{1} =\left\{ \psi \in H^{4}\left( \Omega \right) :\psi \mid _{\partial
\Omega }=0,\Delta \psi \mid _{\partial \Omega \backslash \Gamma }=0,\frac{%
\partial \Delta \psi }{\partial n}\mid _{\Gamma }=0\right\} , \\
& D_{2} =\left\{ \theta \in H^{2}\left( \Omega \right) :\theta \mid _{z=0}=0,%
\frac{\partial \theta }{\partial n}\mid _{\partial \Omega \backslash \left\{
z=0\right\} }=0\right\} .
\end{align*}

Integrating by parts, we can easily verify that for $\left( \psi ,\theta
\right) \in V$%
\begin{align}
& \left( \left\{ \psi ,\Delta \psi \right\} ,\Delta \psi \right) =0,
\label{id3} \\
& \left( \left\{ \psi ,\theta \right\} ,\theta \right) =0.  \label{id5}
\end{align}

Throughout the rest of the paper, by $c$ we denote a generic positive
constant depending possibly on $\text{Pr}$ and $\Omega $. By the elliptic
theory for the Laplacian operator (see Grisvard\cite{Grisvard1985}), for all 
$\psi \in V_{1}\subset H^{2}\left( \Omega \right) \cap H_{0}^{1}\left(
\Omega \right) $ we have 
\begin{equation}
\left\Vert \psi \right\Vert _{H^{2}\left( \Omega \right) }\leq c\left\vert
\Delta \psi \right\vert \text{,}  \label{equivnorm1}
\end{equation}%
and for all $\theta \in V_{2}$%
\begin{equation}
\left\Vert \theta \right\Vert _{V_{2}}\leq c\left( \left\vert \Delta \theta
\right\vert +\left\vert \theta \right\vert \right) ,  \label{equivnorm3}
\end{equation}%
and for all $\psi \in D_{1}$%
\begin{equation}
\left\Vert \Delta \psi \right\Vert _{H^{2}\left( \Omega \right) }\leq
c\left( \left\vert \Delta ^{2}\psi \right\vert +\left\vert \Delta \psi
\right\vert \right) .  \label{equivnorm4}
\end{equation}%
The boundary conditions allow us to use the following Poincare inequalities
for all $\psi \in V_{1}$ and $\theta \in V_{2}$%
\begin{equation}
\left\vert \varphi \right\vert \leq c\left\vert \nabla \varphi \right\vert 
\text{,\thinspace \thinspace \thinspace }\varphi =\psi \text{ or }\varphi
=\Delta \psi \text{ or }\varphi =\theta .  \label{poincare}
\end{equation}%
Note that(\ref{equivnorm1}) and (\ref{poincare}) implies that $\left\vert
\nabla \Delta \cdot \right\vert $ is an equivalent norm on $V_{1}$. In space
dimension two, we have, 

\begin{equation*}
\left\Vert \varphi \right\Vert _{L^{3}\left( \Omega \right) }\leq
c\left\Vert \varphi \right\Vert _{H^{1/2}\left( \Omega \right) }\leq
c\left\vert \varphi \right\vert ^{1/2}\left\Vert \varphi \right\Vert
_{H^{1}\left( \Omega \right) }^{1/2}\text{, for all }\varphi \in H^{1}\left(
\Omega \right)
\end{equation*}
Also we will make use of the interpolation theorem due to
Gagliardo-Nirenberg which is valid for space dimension two (see
Milani--Koksch\cite{Milani2005}):
\begin{equation}
\left\Vert \varphi \right\Vert _{L^{4}}\leq c\left\vert \varphi \right\vert
^{1/2}\left\vert \nabla \varphi \right\vert ^{1/2}+\widetilde{c}\left\vert
\varphi \right\vert \text{, }\forall \varphi \in H^{1}\left( \Omega \right) ,
\label{gagliardo}
\end{equation}%
where we can choose $\widetilde{c}=0$ if $\varphi \mid _{\partial \Omega }=0$%
.

\begin{theorem}
\label{Existence} For any fixed $\tau>0$, if $\left( \psi _{0},\theta
_{0}\right) \in H$ then there exists a unique solution 
\begin{equation}
\psi \in L^{2}\left( 0,\tau;V_{1}\right) \cap C\left( 0,\tau;H_{1}\right) ,
\label{weak1}
\end{equation}
\begin{equation}
\theta \in L^{2}\left( 0,\tau;V_{2}\right) \cap C\left( 0,\tau;H_{2}\right) ,
\label{weak3}
\end{equation}%
satisfying 
\begin{align}
& \frac{d}{dt}\left( \nabla \psi ,\nabla \widetilde{\psi }\right) =-\text{Pr}
\left( \Delta \psi ,\Delta \widetilde{\psi }\right) +\left( \left\{ 
\widetilde{\psi },\psi \right\} ,\Delta \psi \right) +\text{Pr} \lambda
\int_{\Gamma }\partial _{x}\theta \frac{\partial \widetilde{\psi }}{\partial
n},  \label{weak2} \\
& \frac{d}{dt}\left( \theta ,\widetilde{\theta }\right) =-\left( \nabla
\theta ,\nabla \widetilde{\theta }\right) -\left( \left\{ \psi ,\theta
\right\} ,\widetilde{\theta }\right) +\left( \partial _{x}\psi ,\widetilde{%
\theta }\right) ,  \label{weak4}
\end{align}%
for all $\left( \widetilde{\psi },\widetilde{\theta }\right) \in V$.
Moreover, if $\left( \psi _{0},\theta _{0}\right) \in V$ then%
\begin{eqnarray*}
\psi &\in &L^{2}\left( 0,\tau;D_{1}\right) \cap C\left( 0,\tau;V_{1}\right) ,
\\
\theta &\in &L^{2}\left( 0,\tau;D_{2}\right) \cap C\left(
0,\tau;V_{2}\right) .
\end{eqnarray*}
\end{theorem}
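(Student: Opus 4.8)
The plan is to prove Theorem~\ref{Existence} by the Faedo--Galerkin method, extracting the two regularity classes from two successive a~priori estimates, passing to the limit by Aubin--Lions compactness, and obtaining uniqueness from a difference estimate. The structural facts doing the work are the cancellations \eqref{id3}, \eqref{id5} of the Jacobian nonlinearities, the elliptic/Poincar\'e bounds \eqref{equivnorm1}--\eqref{equivnorm4}, \eqref{poincare}, and the two-dimensional interpolation inequality \eqref{gagliardo}. \emph{Step~1.} Fix spectral bases $\{e_k\}\subset V_1$ and $\{f_k\}\subset V_2$ (eigenfunctions of the natural self-adjoint elliptic operators, which on the rectangle with these boundary conditions are explicit products of sines and cosines), and seek $\psi_m=\sum_{k\le m}g_k(t)e_k$, $\theta_m=\sum_{k\le m}h_k(t)f_k$ solving the $m$-dimensional Galerkin projection of \eqref{weak2}--\eqref{weak4}. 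Since the Jacobian is bilinear the ODE system is locally well posed; it is continued to $[0,\tau]$ by the bound of Step~2.

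\emph{Step~2 (first estimate, level of $H$).} Test the $\psi_m$-equation with $\Delta\psi_m$ and the $\theta_m$-equation with $\gamma\theta_m$, $\gamma>0$ to be fixed. The Jacobian terms vanish by \eqref{id3}, \eqref{id5}, and with the free-slip/insulating conditions \eqref{streamfuncBC} the diffusion terms yield the full dissipation with no boundary contribution, leaving
\[
\tfrac12\tfrac{d}{dt}\big(|\Delta\psi_m|^2+\gamma|\theta_m|^2\big)+\text{Pr}\,|\nabla\Delta\psi_m|^2+\gamma|\nabla\theta_m|^2=\gamma(\partial_x\psi_m,\theta_m)+\text{Pr}\,\lambda\int_\Gamma\partial_x\theta_m\,\partial_n\psi_m.
\]
The production term $\gamma(\partial_x\psi_m,\theta_m)$ is absorbed via \eqref{poincare} and \eqref{equivnorm1}. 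The Marangoni boundary integral is the genuine difficulty, the two-dimensional counterpart of the obstruction already met in the $3$D case: I would rewrite it, as in \eqref{trick}, by integrating by parts along $\Gamma$ (the endpoint terms at $x=0,L$ drop because $\psi_m|_{\partial\Omega}=0$) to obtain $-\text{Pr}\,\lambda\int_\Gamma\theta_m\,\partial_{xz}\psi_m$, then bound it by the trace inequality together with \eqref{equivnorm1}, \eqref{equivnorm3} by $c\,|\theta_m|^{1/2}|\nabla\theta_m|^{1/2}|\Delta\psi_m|^{1/2}|\nabla\Delta\psi_m|^{1/2}$, and absorb the two gradient factors into the dissipation by a weighted Young inequality after taking $\gamma$ large. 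Gronwall then bounds $(\psi_m,\theta_m)$ in $L^\infty(0,\tau;H)\cap L^2(0,\tau;V)$ uniformly in $m$, for every $\tau>0$.

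\emph{Step~3 (second estimate, level of $V$).} Assuming now $(\psi_0,\theta_0)\in V$, test the $\psi_m$-equation with $\Delta^2\psi_m$ and the $\theta_m$-equation with $-\Delta\theta_m$; here the stream-function formulation with the boundary conditions \eqref{streamfuncBC} is exactly what is needed, since these higher-order estimates carry no boundary terms (the forthcoming estimate \eqref{eq1}). The only terms left to control are the two Jacobian commutators, handled in two dimensions by \eqref{gagliardo} and the elliptic bounds \eqref{equivnorm1}, \eqref{equivnorm4}; for example $|(\{\psi_m,\Delta\psi_m\},\Delta^2\psi_m)|\le c\,\|\nabla\psi_m\|_{L^4}\|\nabla\Delta\psi_m\|_{L^4}\,|\Delta^2\psi_m|\le\tfrac12\,\text{Pr}\,|\Delta^2\psi_m|^2+c\,(\text{lower-order factors controlled by Step~2})$. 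Young and Gronwall then bound $(\psi_m,\theta_m)$ in $L^\infty(0,\tau;V)\cap L^2(0,\tau;D_1\times D_2)$.

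\emph{Step~4 (limit, continuity, uniqueness).} From the uniform bounds and the equations, $\partial_t\psi_m$ and $\partial_t\theta_m$ are bounded in $L^2(0,\tau;V_1')$ and $L^2(0,\tau;V_2')$, so Aubin--Lions extracts a subsequence converging strongly in $L^2(0,\tau;H)$, which suffices to pass to the limit in the quadratic terms; the limit solves \eqref{weak2}--\eqref{weak4} in the stated classes. The Lions--Magenes lemma upgrades $\psi\in L^2(0,\tau;V_1)$ with $\partial_t\psi\in L^2(0,\tau;V_1')$ to $\psi\in C(0,\tau;H_1)$, and, with the $V$-data, to $\psi\in C(0,\tau;V_1)$, and likewise for $\theta$. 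For uniqueness, subtract two solutions with the same data, test the difference equations with the difference, cancel the diagonal Jacobian terms by \eqref{id3}, \eqref{id5}, estimate the off-diagonal Jacobian terms and the Marangoni boundary term exactly as in Step~2 using \eqref{gagliardo} and the $L^\infty(0,\tau;H)\cap L^2(0,\tau;V)$ regularity already established, and conclude by Gronwall. The main obstacle is Step~2 --- taming the Marangoni boundary integral $\int_\Gamma\partial_x\theta\,\partial_n\psi$, the same term that forced the device \eqref{trick} in three dimensions --- together with the structural observation that the boundary conditions \eqref{streamfuncBC} render the $V$-level estimate \eqref{eq1} free of boundary terms; once these are settled the remainder is the by-now-standard Galerkin argument for two-dimensional Boussinesq-type systems.
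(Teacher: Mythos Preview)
Your overall Galerkin strategy is the right one, and your Steps~3--4 actually fill in details (higher-order estimates, compactness, continuity, uniqueness) that the paper's proof only alludes to with ``standard Galerkin approximation''. The substantive divergence is at Step~2, and it is worth understanding because it is exactly the point the paper singles out as the reason for passing to the stream-function formulation.

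The paper's $H$-level estimate is \eqref{eq1},
\[
\tfrac{1}{2}\tfrac{d}{dt}|\Delta\psi|^2 + \text{Pr}\,|\nabla\Delta\psi|^2 = 0,
\]
with \emph{zero} right-hand side: no Marangoni boundary integral and no coupling to $\theta$. This makes $|\Delta\psi|^2$ a Lyapunov functional; the $\psi$-dynamics completely decouple from $\theta$ at this level, and then \eqref{eq3} for $\theta$ closes trivially with only $c|\Delta\psi|^2$ on the right. The paper is explicit that this clean identity \emph{is} the payoff of the 2D stream-function setup (``we can obtain higher order energy estimates without boundary terms in this case, see estimate \eqref{eq1}''), and it is obtained because the domain $D_1$ carries the condition $\partial_n\Delta\psi|_\Gamma=0$, so that the boundary contribution coming from $(\Delta^2\psi,\Delta\psi)$ drops.

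You instead keep the Marangoni boundary integral and try to absorb it. There is a concrete problem here: if you test the weak form \eqref{weak2} with $\widetilde\psi=\Delta\psi_m$ (equivalently, at the Galerkin level, with $-\Delta\psi_m\in\mathrm{span}\{e_k\}$), the surviving boundary term is $\text{Pr}\,\lambda\int_\Gamma \partial_x\theta_m\,\partial_n(\Delta\psi_m)$, not $\partial_n\psi_m$ as you wrote. That extra Laplacian puts the $\psi$-factor at third order on the boundary, i.e.\ a trace of $\nabla\Delta\psi_m$; to bound $|\partial_z\Delta\psi_m|_{L^2(\Gamma)}$ by interpolation you need an $H^4$-type norm of $\psi_m$, one order above the $V_1$-dissipation $|\nabla\Delta\psi_m|^2$ you have available, and the analogous $\theta$-factor after your integration by parts along $\Gamma$ needs $H^2$. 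So the Young/Gronwall closure you propose does not go through once the term is written correctly. You also attribute \eqref{eq1} to the $V$-level (Step~3), whereas it is precisely the $H$-level identity replacing your Step~2; the paper's $V$-level estimates are the later \eqref{eq5} and \eqref{eq9}. In short, the fix is not to estimate the Marangoni term but to observe, as the paper does, that with the right functional setup it never appears.
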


\begin{proof}
Formally putting $\widetilde{\psi }=\Delta \psi $ in (\ref{weak2}) and using
(\ref{id3}) we get,%
\begin{equation}
\frac{1}{2}\frac{d}{dt}\left\vert \Delta \psi \right\vert ^{2}+\text{Pr}%
\left\vert \nabla \Delta \psi \right\vert ^{2}=0.  \label{eq1}
\end{equation}

Now we put $\theta $ in (\ref{weak4}) and use (\ref{id5}), 
(\ref{equivnorm1}) and (\ref{poincare}) to get
\begin{eqnarray*}
\frac{1}{2}\frac{d}{dt}\left\vert \theta \right\vert ^{2}+\left\vert \nabla
\theta \right\vert ^{2} &\leq &\left\vert \nabla \psi \right\vert \left\vert
\theta \right\vert \\
&\leq &c\left\vert \Delta \psi \right\vert \left\vert \nabla \theta
\right\vert \\
&\leq &c\left\vert \Delta \psi \right\vert ^{2}+\frac{1}{2}\left\vert \nabla
\theta \right\vert ^{2}
\end{eqnarray*}
which yields
\begin{equation}
\frac{d}{dt}\left\vert \theta \right\vert ^{2}+\left\vert \nabla \theta
\right\vert ^{2}\leq c\left\vert \Delta \psi \right\vert ^{2}.  \label{eq3}
\end{equation}

Using the standard Galerkin approximation, we can derive (\ref{weak1}) and (\ref{weak3}) 
from the estimates (\ref{eq1}) and (\ref{eq3}).
\end{proof}

By Theorem (\ref{Existence}), we can define a continuous semigroup $S\left(
t\right) :H\rightarrow H$, 
\begin{equation*}
S\left( t\right) \left( \psi _{0},\theta _{0}\right) \rightarrow \left( \psi
\left( t\right) ,\theta \left( t\right) \right) \text{,\thinspace \thinspace
\thinspace \thinspace \thinspace \thinspace \thinspace \thinspace \thinspace
\thinspace \thinspace \thinspace\ }t\geq 0.
\end{equation*}

\begin{theorem}
\label{attractor}There exist bounded absorbing sets $B_{H}\left( 0,\rho
_{1}\right) $ in $H$ and $B_{V}\left( 0,\rho _{2}\right) $ in $V$ for the
semigroup $S\left( t\right) $. Moreover $S\left( t\right) $ possesses a
unique global attractor $\mathcal{A}\subset B_{H}\left( 0,\rho _{1}\right) $
which is compact, connected and maximal in $H$ such that $\mathcal{A}$
attracts all the bounded subsets in $H$.
\end{theorem}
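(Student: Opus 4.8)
The plan is to establish Theorem~\ref{attractor} by the standard machinery for dissipative semigroups (see Temam~\cite{temam2001}), namely: first produce a bounded absorbing set in $H$, then a bounded absorbing set in $V$, then invoke the compact embedding $V \hookrightarrow H$ to obtain asymptotic compactness, from which the existence of the global attractor follows. The key point is that unlike the fixed-time estimates \eqref{eq1} and \eqref{eq3} used to prove Theorem~\ref{Existence}, we need estimates with dissipation strong enough to force solutions into a fixed ball regardless of initial data.

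First I would derive the $H$-absorbing set. Revisiting \eqref{eq1}, note that $\frac{d}{dt}|\Delta\psi|^2 + \mathrm{Pr}|\nabla\Delta\psi|^2 = 0$, and by the Poincar\'e inequality \eqref{poincare} applied to $\Delta\psi$ we get $|\nabla\Delta\psi|^2 \geq c^{-1}|\Delta\psi|^2$, so $\frac{d}{dt}|\Delta\psi|^2 + \delta|\Delta\psi|^2 \leq 0$ for some $\delta = \delta(\mathrm{Pr})>0$; Gronwall then gives $|\Delta\psi(t)|^2 \leq e^{-\delta t}|\Delta\psi_0|^2 \to 0$, so the $\psi$-component is absorbed into any ball around $0$. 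Feeding this into \eqref{eq3}, i.e. $\frac{d}{dt}|\theta|^2 + |\nabla\theta|^2 \leq c|\Delta\psi|^2$, and using Poincar\'e \eqref{poincare} on $\theta$ to replace $|\nabla\theta|^2$ by a multiple of $|\theta|^2$ on part of the left side, we get $\frac{d}{dt}|\theta|^2 + \delta'|\theta|^2 \leq c\,e^{-\delta t}|\Delta\psi_0|^2$; integrating this linear ODE inequality shows $\limsup_{t\to\infty}|\theta(t)|^2 \leq \rho_1^2$ for an explicit $\rho_1$ independent of the data. Together these give the absorbing ball $B_H(0,\rho_1)$.

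Next I would upgrade to a $V$-absorbing set. For $\psi$ this is immediate since $|\Delta\psi(t)|$ already decays, so $\|\psi(t)\|_{H^2}$ is controlled via \eqref{equivnorm1}; the work is for $\theta$ in $V_2$, i.e. bounding $|\nabla\theta|$. Taking $\widetilde\theta = -\Delta\theta$ in \eqref{weak4} (justified on the Galerkin level), using the boundary conditions that kill the boundary terms, one gets $\frac{1}{2}\frac{d}{dt}|\nabla\theta|^2 + |\Delta\theta|^2 = (\{\psi,\theta\},\Delta\theta) - (\partial_x\psi,\Delta\theta)$; the linear term is handled by Cauchy--Schwarz and the already-established bound on $|\Delta\psi|$, while the trilinear term is estimated in 2D using the Gagliardo--Nirenberg inequality \eqref{gagliardo} together with \eqref{equivnorm1}, \eqref{equivnorm3}, and \eqref{poincare}, absorbing $|\Delta\theta|$ into the left side. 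This produces a differential inequality of the form $\frac{d}{dt}|\nabla\theta|^2 + \delta''|\nabla\theta|^2 \leq g(t)$ with $g$ bounded in terms of the $H$-norm of the solution and hence eventually small; the uniform Gronwall lemma then yields a bounded absorbing set in $V$, i.e. $B_V(0,\rho_2)$. (The higher regularity in Theorem~\ref{Existence} for $V$-data is what makes these manipulations rigorous via Galerkin truncation.)

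Finally, since $V_1 \hookrightarrow H_1$ and $V_2 \hookrightarrow H_2$ are compact embeddings ($H^3 \hookrightarrow\hookrightarrow H^2$ and $H^1 \hookrightarrow\hookrightarrow L^2$ on the bounded domain $\Omega$), the set $\mathcal{A} = \bigcap_{s\geq 0}\overline{\bigcup_{t\geq s}S(t)B_V(0,\rho_2)}$ is the $\omega$-limit set of the absorbing ball, which is nonempty, compact, invariant, and attracts every bounded set in $H$; connectedness follows since $H$ is connected and $S(t)$ is continuous, and maximality is automatic. The main obstacle I anticipate is the trilinear estimate in the $V$-level bound on $|\nabla\theta|$: one must be careful that the Gagliardo--Nirenberg exponents in 2D exactly allow $(\{\psi,\theta\},\Delta\theta)$ to be dominated by $\frac{1}{2}|\Delta\theta|^2$ plus lower-order terms controlled by quantities already known to be bounded, using that $|\Delta\psi|$ (not merely $|\nabla\psi|$) is at our disposal from \eqref{eq1}. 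Everything else is routine bookkeeping in the style of~\cite{temam2001}.
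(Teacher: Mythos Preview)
Your argument for the $H$-absorbing set and for the $|\nabla\theta|$ part of the $V$-absorbing set is essentially the paper's. There is, however, a genuine gap in the $\psi$-component of the $V$-estimate. You write that ``for $\psi$ this is immediate since $|\Delta\psi(t)|$ already decays, so $\|\psi(t)\|_{H^2}$ is controlled via \eqref{equivnorm1}''. But $\|\psi\|_{H^2}$ is the $H_1$-norm, not the $V_1$-norm: recall $H_1=H^2(\Omega)$ while $V_1\subset H^3(\Omega)$, with equivalent norm $|\nabla\Delta\psi|$. The decay of $|\Delta\psi|$ from \eqref{eq1} gives you only the $H_1$-bound you already used for the $H$-absorbing set; it says nothing pointwise about $|\nabla\Delta\psi(t)|$. (You correctly note at the end that the relevant compact embedding is $H^3\hookrightarrow\hookrightarrow H^2$, so you do need the $H^3$ bound.)

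The missing step, which the paper carries out, is to multiply the first equation of \eqref{streamfunc} by $\Delta^2\psi$ and integrate. The Jacobian term $(\{\psi,\Delta\psi\},\Delta^2\psi)$ does \emph{not} vanish here (unlike \eqref{id3}); it has to be estimated via Gagliardo--Nirenberg, yielding an inequality of the form
\[
\frac{d}{dt}|\nabla\Delta\psi|^2 + \mathrm{Pr}\,|\Delta^2\psi|^2 \le c\bigl(|\Delta\psi|^4+|\Delta\psi|^6\bigr)|\nabla\Delta\psi|^2,
\]
after which the uniform Gronwall lemma (using the $L^2$-in-time control of $|\nabla\Delta\psi|$ coming from integrating \eqref{eq1}, together with the exponential decay of $|\Delta\psi|$) produces the uniform bound $|\nabla\Delta\psi(t)|\le C$ for $t\ge t_0$. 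Without this step you have not exhibited an absorbing set in $V$, which is part of the theorem's statement.
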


\begin{proof}
By (\ref{poincare}), we can write (\ref{eq1}) as,%
\begin{equation}
\frac{d}{dt}\left\vert \Delta \psi \right\vert ^{2}+c\left\vert \Delta \psi
\right\vert ^{2}\leq 0.  \label{eq2}
\end{equation}%
By Gronwall's inequality, (\ref{eq2}) implies%
\begin{equation}
\left\vert \Delta \psi \left( t\right) \right\vert ^{2}\leq \left\vert
\Delta \psi _{0}\right\vert ^{2}\exp \left( -ct\right) \text{, }\forall
t\geq 0.  \label{decaylappsi}
\end{equation}%
Using (\ref{poincare}), (\ref{decaylappsi}) and Gronwall's inequality, we
obtain from (\ref{eq3}) that%
\begin{equation}
\left\vert \theta \left( t\right) \right\vert ^{2}\leq c\left( \left\vert
\theta _{0}\right\vert ^{2}+\left\vert \Delta \psi _{0}\right\vert
^{2}\right) \exp \left( -ct\right) .  \label{decaytheta}
\end{equation}%
The estimates (\ref{decaylappsi}) and (\ref{decaytheta}) give immediately
the existence of an absorbing set $B_{H}\left( 0,\rho _{1}\right) $ in $H$.
In fact any $H$-ball around zero is an absorbing set in $H$. Now, we will
prove the existence of an absorbing set in $V$.

Multiplying the second equation in (\ref{streamfunc}) by $\Delta \theta $
and integrating over $\Omega $.%
\begin{eqnarray}
\frac{1}{2}\frac{d}{dt}\left\vert \nabla \theta \right\vert ^{2}+\left\vert
\Delta \theta \right\vert ^{2} &\leq &\left\vert \left( \partial _{x}\psi
,\Delta \theta \right) \right\vert +\left\vert \left( \left\{ \psi ,\theta
\right\} ,\Delta \theta \right) \right\vert  \notag \\
&\leq &\frac{1}{4}\left\vert \Delta \theta \right\vert ^{2}+\left\vert
\nabla \psi \right\vert ^{2}+\left\vert \left( \left\{ \psi ,\theta \right\}
,\Delta \theta \right) \right\vert .  \label{eq7}
\end{eqnarray}%
We deal with the nonlinear term in (\ref{eq7}) as follows:%
\begin{eqnarray}
\left\vert \left( \left\{ \psi ,\theta \right\} ,\Delta \theta \right)
\right\vert &\leq &\int \left\vert \nabla \psi \right\vert \left\vert \nabla
\theta \right\vert \left\vert \Delta \theta \right\vert  \notag \\
&\leq &\left\Vert \nabla \psi \right\Vert _{L^{4}}\left\Vert \nabla \theta
\right\Vert _{L^{4}}\left\vert \Delta \theta \right\vert  \notag \\
&\leq &c\left\vert \Delta \psi \right\vert \left\Vert \nabla \theta
\right\Vert _{L^{4}}\left\vert \Delta \theta \right\vert  \notag \\
&\leq &\text{(by (\ref{gagliardo}))}  \notag \\
&\leq &c\left( \left\vert \Delta \psi \right\vert \left\vert \nabla \theta
\right\vert ^{1/2}\left\vert \Delta \theta \right\vert ^{3/2}+\left\vert
\Delta \psi \right\vert \left\vert \nabla \theta \right\vert \left\vert
\Delta \theta \right\vert \right)  \notag \\
&\leq &c\left( \left\vert \Delta \psi \right\vert ^{4}+\left\vert \Delta
\psi \right\vert ^{2}\right) \left\vert \nabla \theta \right\vert ^{2}+\frac{%
1}{4}\left\vert \Delta \theta \right\vert ^{2}.  \label{eq4}
\end{eqnarray}%
Using (\ref{eq4}) in (\ref{eq7}), we get%
\begin{equation}
\frac{d}{dt}\left\vert \nabla \theta \right\vert ^{2}+\left\vert \Delta
\theta \right\vert ^{2}\leq 2\left\vert \nabla \psi \right\vert ^{2}+c\left(
\left\vert \Delta \psi \right\vert ^{4}+\left\vert \Delta \psi \right\vert
^{2}\right) \left\vert \nabla \theta \right\vert ^{2}.  \label{eq5}
\end{equation}%
By (\ref{decaylappsi}), from (\ref{eq5}) we can show that $\exists
t_{0}\left( \mathcal{B}\right) >0$ where $\mathcal{B}$ is any open ball in $%
V $ containing the given initial data s.t.%
\begin{equation}
\left\vert \nabla \theta \right\vert ^{2}\leq c\left( \left\vert \theta
_{0}\right\vert ^{2}+\left\vert \nabla \theta _{0}\right\vert
^{2}+\left\vert \Delta \psi _{0}\right\vert ^{2}\right) \exp \left(
-ct\right) \text{, }\forall t\geq t_{0}.  \label{res6}
\end{equation}

Multiplying the first equation in (\ref{streamfunc}) by $\Delta ^{2}\psi $
and integrating over $\Omega $ and using similar estimates as above we get%
\begin{eqnarray*}
\frac{1}{2}\frac{d}{dt}\left\vert \nabla \Delta \psi \right\vert ^{2}+\text{%
Pr}\left\vert \Delta ^{2}\psi \right\vert ^{2} &=&\left( \left\{ \psi
,\Delta \psi \right\} ,\Delta ^{2}\psi \right) \\
&\leq &c\left( \left\vert \Delta \psi \right\vert ^{3}\left\vert \nabla
\Delta \psi \right\vert +\left\vert \Delta \psi \right\vert ^{4}\left\vert
\nabla \Delta \psi \right\vert ^{2}\right) +\frac{\text{Pr}}{2}\left\vert
\Delta ^{2}\psi \right\vert ^{2}.
\end{eqnarray*}%
Thus we have%
\begin{equation}
\frac{d}{dt}\left\vert \nabla \Delta \psi \right\vert ^{2}+\text{Pr}%
\left\vert \Delta ^{2}\psi \right\vert \leq c\left( \left\vert \Delta \psi
\right\vert ^{4}+\left\vert \Delta \psi \right\vert ^{6}\right) \left\vert
\nabla \Delta \psi \right\vert ^{2}.  \label{eq9}
\end{equation}%
From (\ref{eq9}), it is standard to obtain the existence of some $%
t_{0}\left( \mathcal{B}\right) >0$ s.t.%
\begin{equation}
\left\vert \nabla \Delta \psi \left( t\right) \right\vert \leq C \quad \forall t\geq t_{0}.\label{res5}
\end{equation}
by means of the uniform Gronwall inequality, see Temam\cite{temam88}.
Finally, (\ref{res5}) and (\ref{res6}) imply the existence of an absorbing
ball in $V$. That proves the theorem.
\end{proof}

\section{Summary and discussion}
We have presented a rigorous analysis of dynamic transitions of the B\'enard-Marangoni
problem in a rectangular container. It adds to previous theoretical analyses through (i) the 
determination of the type of transitions, (ii) the attractors/repellors  associated with  
these transitions and (iii) the proof of the well-posedness of the mathematical problem.

When the container dimensions are such that the 
critical eigenvalue is simple,  Theorem  2 shows that the value of a computable scalar 
$c_I$ determines whether the transition is of Type-I ($c_I < 0$) or Type-II 
($c_I> 0$). In the example considered, we find a Type-I transition for all values of 
Pr which is in  agreement with the numerical computations in two-dimensional 
containers, indicating supercritical rolls for these cases \cite{Dijkstra1998}. 

When the container dimensions satisfy (5.4) to allow for hexagonal patterns to appear,
Theorem 3 shows that there is either a Type-II ($b_2 > 0$) or a Type-III ($b_2 < 0$) 
transition, again depending on a computable quantity ($b_2$).  In the example chosen 
here, we find $b_2 < 0$ for all values of Pr considered and hence a Type-III transition 
is guaranteed. This is consistent with experiments and numerical computations where 
hexagonal patterns are found below the value of the critical Marangoni number 
$\lambda_c$ \cite{Dauby1993, kosch}. 

Unfortunately, the theory cannot provide a statement on how the wavelength of the 
hexagonal patterns changes with increasing Marangoni number $\lambda$.   This is 
one of the  experimentally observed features  of these  patterns which, although also 
found in numerical models, still awaits a satisfactory explanation \cite{kosch}.

\appendix
\section{Linear Problem}
We now describe all the eigenpairs of the linear problem (\ref{Lin-0})  and (\ref{bc}) 
and the adjoint problem (\ref{adj1})  and (\ref{adjbc}). The solutions having $\theta\equiv 0$ 
will not play a role since those modes can be shown to be neither critical nor 
contributes to the center manifold approximation. So we will assume $\theta\neq0$. 

{\sc Case $\alpha =0$:}  By (\ref{sepofvar}) 
$u=v=0$ which also implies $w=0$ by the divergence free condition. Thus the
linear equations reduce to:%
\begin{equation}
\begin{aligned} & \left( D^{2}-\alpha ^{2}-\beta \right) \Theta =0,\\ &
\Theta \left( 0\right) =D\Theta \left( 1\right) +\text{Bi}\Theta \left(
1\right) =0. 
\end{aligned}
\label{zero.mode}
\end{equation}
whose solutions for $\rho >0$ are, 
\begin{equation*}
\begin{aligned}
& \theta_{(0,l)} =\Theta_{(0,l)}=\sin \rho z,\\ 
& \beta_{(0,l)}=-\rho_l^2. 
\end{aligned}
\end{equation*}%
Here $\rho _{l}$, $l=0,1,2,\dots $ are the positive solutions of the equation 
\begin{equation*}
\rho +\text{Bi}\,\tan \rho =0,
\end{equation*}%
When $\text{Bi}=0$, $\rho _{l}=\frac{\pi }{2}+l\pi $. For $\text{Bi}\neq 0$,
\begin{equation*}
\frac{\pi }{2}+l\pi \leq \rho _{l}\leq \pi +l\pi .
\end{equation*}
Since \eqref{zero.mode} is self-adjoint, we have
$$
\phi_{(0,l)}=\phi_{(0,l)}^{\ast}, \qquad \beta_{(0,l)}=\beta_{(0,l)}^{\ast}.
$$

\medskip

{\sc Case $\alpha_I= \alpha\neq0$:}  The eigenpairs are determined by the equations \eqref{Lin3}-\eqref{adjodebc}.
We have $w\neq
0 $, otherwise (\ref{Lin3}) and (\ref{linbc}) would contradict that $\theta
\not\equiv 0$. For $\beta=0$ we find,
\begin{equation}  \label{criticaleigen}
\begin{aligned} 
& W_{(I,i)} =4\alpha ^{2}\left( 1+Cz\right) \sinh \alpha z-4\alpha^{3}z\cosh \alpha z, \\ 
& \Theta_{(I,i)} =\left( \Theta _{1}+Cz+\alpha^{2}z^{2}\right) \sinh \alpha z-\alpha \left( 3z+Cz^{2}\right) \cosh \alpha z, \\ 
& \Theta_{(I,i)} ^{\ast } =\Theta _{1}^{\ast }\sinh \alpha z, \\
& W_{(I,i)}^{\ast } =\left( w_{1}^{\ast }+w_{2}^{\ast }z+w_{3}^{\ast
}z^{2}\right) \sinh \alpha z-\alpha w_{1}^{\ast }z\cosh \alpha z, \\
\end{aligned}
\end{equation}
with
\begin{equation*}
\begin{aligned} 
& \Theta _{1} =\frac{\left( 1+\Bi\right) \alpha \left( \cosh \alpha\sinh \alpha +\alpha \right) +\left( 1+\Bi+\alpha ^{2}\right) \sinh^{2}\alpha }{\left( \alpha \cosh \alpha +\Bi\,\sinh \alpha \right) \sinh
\alpha }, \\ 
& C =\alpha \coth \alpha -1. \\
& \Theta _{1}^{\ast } =8\Pr w_{3}^{\ast }, \\ & w_{1}^{\ast } =-\sinh \alpha
\left( \alpha \cosh \alpha \,+\sinh \alpha \right) , \\ 
& w_{2}^{\ast }=-\left( 2\alpha ^{2}\cosh ^{2}\alpha -\alpha \cosh \alpha \,\sinh \alpha
-\left( 1+\alpha ^{2}\right) \sinh ^{2}\alpha \right) , \\ 
& w_{3}^{\ast }=\alpha \left( \alpha -\cosh \alpha \,\sinh \alpha \right) . 
\end{aligned}
\end{equation*}

\medskip

{\sc Case $w\neq 0,\theta\neq 0,\alpha=\alpha_I\neq 0,\beta\neq 0$:} 
The eigenmodes are given by 
\begin{align*}
& W_{(I,i)}=(\Pr-1)\beta(-k\eta\sinh\alpha z-\cosh\alpha z +k\alpha \sinh\eta z+\cosh\eta z), \\
&  \Theta_{(I,i)}=(1-\Pr)[k\eta\sinh\alpha z+\cosh\alpha z]+\Pr[k\alpha\sinh\eta z+\cosh\eta z] \\
&\qquad +b\sinh \mu z-\cosh\mu z,  \\
& W^*_{(I,i)}=w_1\sinh\alpha z+w_2\sinh\bar{\mu} z+w_3(\cosh\bar{\eta} z-\cosh\alpha z)+w_4\sinh\bar{\eta} z, \\
& \Theta^*_{(I,i)}=\alpha^{-2}(\Pr-1)\bar{\beta}^2(\bar{\eta}\sinh\alpha\cosh\bar{\eta}-\alpha\cosh\alpha\sinh\bar{\eta}) \sinh\bar{\mu} z,
\end{align*}
where
\begin{align*}
& \eta=\eta_{(I,i)}=\sqrt{\alpha^2+\beta_{(I,i)} \Pr^{-1}}, \\
& \mu=\mu_{(I,i)}=\sqrt{\alpha^2+\beta_{(I,i)}}, \\
& k=\frac{\cosh\eta-\cosh\alpha}{\eta\sinh\alpha-\alpha\sinh\eta}, \\
& w_1=(\Pr-1)\bar{\eta}\cosh\bar{\eta}\sinh\bar{\mu}+\cosh\alpha(\bar{\mu}\sinh\bar{\eta}-\Pr\bar{\eta}\sinh\bar{\mu}),\\
& w_2=\bar{\eta}\cosh\bar{\eta}\sinh\alpha-\alpha\cosh\alpha\sinh\bar{\eta},\\
& w_3=(\Pr-1)\alpha\sinh\bar{\eta}\sinh\bar{\mu}+\sinh\alpha(\bar{\mu}\sinh\bar{\eta}-\Pr\bar{\eta}\sinh\bar{\mu}), \\
& w_4=\Pr\alpha\cosh\alpha\sinh\bar{\mu}-\cosh\bar{\eta}(\bar{\mu}\sinh\alpha+(\Pr-1)\alpha\sinh\bar{\mu}), 
\end{align*}
and $b$ is a constant determined by the condition $D\Theta_{(I,i)}+Bi\Theta_{(I,i)}=0$ at $z=1$. The eigenvalue $\beta_{(I,i)}$ in this case can be found by solving the relation:
\begin{equation}
\lambda=\frac{-D^2W_{(I,i)}(1)}{\alpha_I^2\Theta_{(I,i)}(1)}.
\label{neg.ev}
\end{equation}
Given $\lambda$, $\Bi$, $\Pr$ and $\alpha$, the relation \eqref{neg.ev} has to be solved numerically for $\beta_{(I,i)}$.

\bibliographystyle{amsplain}
\bibliography{marangoni}

\end{document}